\documentclass[12pt,draftcls,onecolumn]{IEEEtran}
\ifCLASSINFOpdf
\usepackage[pdftex]{graphicx}
\graphicspath{{../pdf/}{../jpeg/}}
\DeclareGraphicsExtensions{.pdf,.jpeg,.png}
\else
\usepackage[dvips]{graphicx}
\graphicspath{{../eps/}}
\DeclareGraphicsExtensions{.eps}
\fi\usepackage{graphicx}

\usepackage{graphics}
\usepackage{epsfig}
\usepackage{epstopdf}
\usepackage{stfloats}
\usepackage[cmex10]{amsmath}
\usepackage{algorithmic}
\usepackage{array}
\usepackage{mdwmath}
\usepackage{mdwtab}
\usepackage{graphicx}
\usepackage{subfigure}
\usepackage{color}
\usepackage{amsfonts,amssymb}
\usepackage{hyperref} 
\usepackage{multirow}
\usepackage{diagbox}
\usepackage{caption}
\usepackage{bbding}
\usepackage{amsfonts,amsthm,array} 
\usepackage[ruled]{algorithm2e}
\usepackage{makecell}
\usepackage{lineno} 

\usepackage{float}  

\newtheorem{lemma}{Lemma}

\begin{document}
	
	
    \title{Trajectory and Power Design for Aerial Multi-User Covert Communications	\thanks{Manuscript received.}}

    \author{Hongjiang~Lei, 
	Jiacheng~Jiang,
	Imran~Shafique~Ansari, \\
	Gaofeng~Pan, 	
	and~Mohamed-Slim~Alouini 
\thanks{Hongjiang Lei is with the School of Communication and Information Engineering, Chongqing University of Posts and Telecommunications, Chongqing 400065, China, also with Chongqing Key Lab of Mobile Communications Technology, Chongqing 400065, China (e-mail: leihj@cqupt.edu.cn).}
\thanks{Jiacheng~Jiang is with the School of Communication and Information Engineering, Chongqing University of Posts and Telecommunications, Chongqing 400065, China (e-mail: cquptjjc@163.com).}
\thanks{Imran~Shafique~Ansari is with Bharath Institute of Science and Technology, Chennai, India (e-mail: ansarimran@ieee.org).}
\thanks{Gaofeng~Pan is with the School of Cyberspace Science and Technology, Beijing Institute of Technology, Beijing 100081, China (e-mail: gaofeng.pan.cn@ieee.org).}
\thanks{Mohamed-Slim~Alouini is with CEMSE Division, King Abdullah University of Science and Technology (KAUST), Thuwal 23955-6900, Saudi Arabia (e-mail: slim.alouini@kaust.edu.sa).}
}

\maketitle

\begin{abstract}
Unmanned aerial vehicles (UAVs) can provide wireless access to terrestrial users, regardless of geographical constraints, and will be an important part of future communication systems. 
In this paper, a multi-user downlink dual-UAVs enabled covert communication system was investigated, in which a UAV transmits secure information to ground users in the presence of multiple wardens as well as a friendly jammer UAV transmits artificial jamming signals to fight with the wardens. 
The scenario of wardens being outfitted with a single antenna is considered, and the detection error probability (DEP) of wardens with finite observations is researched. 
Then, considering the uncertainty of wardens' location, a robust optimization problem with worst-case covertness constraint is formulated to maximize the average covert rate by jointly optimizing power allocation and trajectory. 
To cope with the optimization problem, an algorithm based on successive convex approximation methods is proposed. 
Thereafter, the results are extended to the case where all the wardens are equipped with multiple antennas. 
After analyzing the DEP in this scenario, a tractable lower bound of the DEP is obtained by utilizing Pinsker's inequality. 
Subsequently, the non-convex optimization problem was established and efficiently coped by utilizing a similar algorithm as in the single-antenna scenario. 
Numerical results indicate the effectiveness of our proposed algorithm.
\end{abstract}

\begin{IEEEkeywords}
	Covert communication, 
	unmanned aerial vehicle, 
	cooperative jamming, 
	trajectory and power optimization, 
\end{IEEEkeywords}
	

\section{Introduction}
\label{sec:introduction}

\subsection{Background and Related Works}
\label{sec:Background and Related Works}

With the advantages of flexible deployment and high mobility, unmanned aerial vehicle (UAV)-assisted wireless communications have been widely used in both civil and military applications, including disaster rescue, surveillance, data gathering, as well as data relaying \cite{ZengY2016Mag}, \cite{KimH2018Mag}. 
For example, the UAV-assisted terrestrial network established an emergency communication framework and expanded wireless coverage in \cite{ZhaoN2019WC}. 
In particular, by designing the position of UAVs or optimizing their trajectory, it is possible to have a high probability that the air-to-ground (A2G) links between UAVs and ground nodes are line-of-sight (LoS) \cite{WuQ2018TWC}. 
Therefore, the trajectory of UAVs has a deterministic impact on the performance of the aerial communication systems and becomes a fundamental problem to be solved in designing airborne communication systems \cite{WuQ2019WC}. 
For instance, in \cite{LiP2020TWC}, the authors studied a UAV-enabled multiple access channel in which multiple ground users transmit individual messages to a mobile UAV in the sky, and the one-dimensional trajectory and resource allocation were designed to improve the transmission rate. 
In \cite{HuaM2020TCOM}, the authors investigated a UAV-aided simultaneous uplink and downlink transmission network, and the three-dimensional (3D) trajectory, transmit power, and the schedule was optimized to improve the throughput.

Physical layer security (PLS) is a prospective technology to improve the security of data delivery in UAV-assisted communication systems. 
Much of the previous work on UAV communications networks focused on preventing classified information from being eavesdropped by adversary \cite{ZhangG2019TWC, WangW2021JSAC, YaoJ2020TWC, DuoB2021ChinaCom}. 
Nevertheless, concealing UAV wireless transmissions to prevent UAV communications from being monitored by adversaries has yet to be addressed, which is a significant challenge in future work of UAV communications \cite{ChenX2023Survey}. 
The most typical example is the application in the military, where once the opponent monitors the transmission behavior of the UAV, they will carry out further actions, which may result in the interruption of communication. 
To ensure the covertness of the communication, it is necessary to make the existence of the confidential wireless transmission undetectable by the adversary to the maximum extent. 
For this reason, covert communication, which ensures low interception rates of transmission, has been studied in \cite{BashBA2015Mag}. 
Existing works indicate that covert communication can be achieved through random noises \cite{WangY2021TVT}, \cite{WangD2022TITS}, interference \cite{ShahzadK2018TWC}, or random transmit power \cite{YanS2019TIFS}. 
A friendly jammer was introduced to assist the communication by actively generating jamming signals to degrade the detection performance of the warden. 
Unlike random noise with uncontrollable behavior, interference can be effectively manipulated by optimizing the transmit power of the jammer. 
For instance, the ground jammer-aided multi-antenna UAV covert communication system was investigated in \cite{DuH2022JSAC} and game theory was utilized to determine the optimal transmit and jamming power allocation scheme. 
In \cite{ChenX2021TVT}, the authors optimized the location and transmit power of the terrestrial multi-antenna jammer to maximize the warden's detection error probability (DEP).
The influence of different monitor strategies of wardens on covertness is becoming a hot research topic. 
For instance, in \cite{ForouzeshM2020TCOM}, the covert rate was maximized by joint optimizing power allocation and relay selection under the covert constraints for the non-colluding and colluding scenarios. 
The authors considered covert communication in the presence of a multi-antenna adversary and analyzed the effect of the number of antennas utilized at the adversary on the achievable throughput of covert communication in \cite{ShahzadK2019TVT}. 
And in \cite{HuJ2020TVT}, a detection strategy based on multiple antennas with beam sweeping to detect the potential transmission of UAV in wireless networks was investigated.

Due to characteristics of low cost, flexible deployment, and LoS wireless links, which provide significant performance improvement, the UAV-aided communication systems have attracted growing research interests \cite{ChenX2022Netw, WangH2019TC, JiangX2021WC, YanS2021JSAC}. 
However, since the broadcast feature of wireless communications and the high probability of LoS links for aerial communications, UAV-enabled communications are more vulnerable to malicious adversaries \cite{CuiM2018TVT}. 
In \cite{JiangX2021WC}, the result that the A2G channel leads to a lower DEP for the warden has been shown. 
Existing studies demonstrated that covert performance can be enhanced by jointly optimizing the UAV's trajectory and transmit power, etc.
The average covert transmission rate was maximized by joint UAV trajectory and transmit power optimization in \cite{ZhouX2019TSP}.
An aerial cooperative covert communication scheme was proposed with finite block length by optimizing the block length and transmit power to maximize the effective transmission bits from the transmitter to the legitimate receiver against a flying warden in \cite{ChenX2021TCOM}.

As spelled out in \cite{ZhouX2019TSP}, the warden's location can be obtained by the cameras or radars mounted on the UAVs. 
But in practical scenarios, since wardens want to hide from the legitimate network, estimating the exact location and channel state information is challenging, especially for wardens with some deception equipment. 
As a result, the assumption of perfect warden location significantly affects covertness performance results. 
Some research investigated covert communication in the scenario of warden location imperfect. 
The authors in \cite{JiangX2021TVT} investigated a multi-user covert communications system with a single ground warden whose locations are assumed to be imperfect. 
In \cite{WangC2023TCOM}, the authors investigated a covert communication scheme assisted by the UAV-intelligent reflecting surface (IRS) to maximize the covert transmission rate in the scenario of warden location imperfect.

\subsection{Motivation and Contributions}
Compared with terrestrial jammers, aerial jammers can dramatically improve the performance of communication systems due to the high mobility and LoS dominating the A2G channel. 
In addition, the combination of UAVs and covert communications allows the advantages of flexibility, high mobility and low cost to exploit. 
It is necessary to consider the existence of multiple detection terrestrial nodes with uncertain locations and equipped with multi-antenna detection equipment. 
Motivated by this, in this work, we consider covert communication with dual collaborative UAV systems with multiple terrestrial location uncertainty wardens to attempt to monitor the transmission of the aerial base stations using a limited number of observations. 
The main contributions of this work are summarized as follows:

\begin{enumerate}
	
	\item We consider a legitimate UAV operates as an aerial base station and transmits messages to confidential users, and the multiple terrestrial location uncertainty wardens try to monitor the transmission. The average covert rate is maximized by jointly optimizing the trajectory of two UAVs and the transmit power of the base station UAV. 
	The scenario of wardens being outfitted with a single antenna is considered, and the DEP of wardens with finite observations is researched. 
	Then, considering the uncertainty of wardens' location, a robust optimization problem with worst-case covertness constraint is formulated to maximize the average covert rate by jointly optimizing power allocation and trajectory. 
	Due to its non-convexity, an successive convex approximation (SCA)-based algorithm is proposed to cope with this intractable non-convex problem which can ensure that the optimization problem converges to a Karush-Kuhn-Tucker (KKT) solution. 
	The simulation results are given to verify the proposed algorithm's efficiency, and the proposed scheme's superiority is well demonstrated.
	
	\item Moreover, the scenarios that all the wardens equipped with multiple antennas are considered 
	The closed-form expression of the DEP is obtained, which is difficult to utilize as the constraint. 
	Based on Pinsker’s inequality, we obtain the lower bound of the DEP, which is utilized as the covert constraint.
	Subsequently, a non-convex optimization problem was established and efficiently coped by utilizing a similar algorithm as in the single-antenna scenario. 
	
	\item Relative to \cite{YanS2021JSAC, ZhouX2019TSP}, wherein an aerial base station was utilized to achieve covert communication, dual-UAVs covert communication systems with multiple terrestrial nodes are considered in this work, which makes the optimization problem more challenging and the DEP more complex.
	
	\item Relative to \cite{JiangX2021TVT, WangC2023TCOM} wherein the presence of a warden equipped with a single antenna is considered in the communication systems, the presence of multiple wardens equipped with multiple antennas is considered in the considered communication systems, which makes handling of DEP more challenging. 
	
	\item 
	Compared to \cite{ChenX2021TVT} wherein the wardens' location is assumed to be completely known at the base station, the place of the wardens in this work is uncertain, which makes the application case of the communication system more reasonable.
\end{enumerate}

\subsection{Organization}
The rest of this paper is organized as follows. 
The system model and problem formulation are provided in Section \ref{sec:SystemModelandProblemFormulation}. 
A SCA-based algorithm is proposed in Section \ref{sec:CovertRateMaximization} to solve the problem. 
Section \ref{sec:AllTheWardensEquippedwithMultipleAntennas} extends the system model to the scenario of all the wardens equipped with multiple antennas. 
In Section \ref{sec:Simulation}, the superiority and effectiveness of our proposed algorithm have been demonstrated via numerical results compared to other benchmark schemes. 
Finally, Section \ref{sec:Conclusion} concludes this paper.

\section{System Model and Problem Formulation}
\label{sec:SystemModelandProblemFormulation}

\begin{figure}[t]
	\centering		
	\includegraphics[width = 3in]{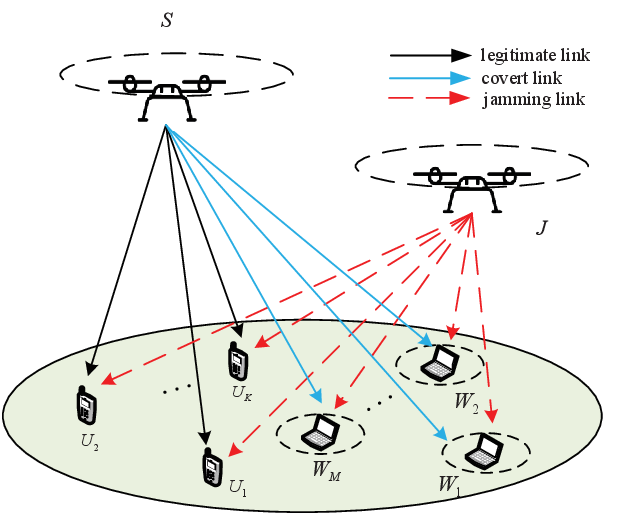}
    \caption{Dual-UAVs enabled covert communication system.}
    \label{fig_model}
\end{figure}

\subsection{System Model}
As shown in Fig. \ref{fig_model}, a flying UAV that works as a base station (${S}$) is transmitting confidential data to ${N_U}$ ground users ($U_k, k =1, \cdots ,N_U$). 
At the same time, $N_W$ terrestrial wardens ($W_m, m = 1, \cdots ,N_W$) monitor the transmission of ${S}$ and ${S}$ tries to hide its transmission from all the wardens. 
A friendly jammer UAV (${J}$) sends jamming signals with the fixed power $P_J$ to fight with the wardens. 
All the wardens are assumed to be equipped with $K$ antennas and the other devices are equipped with a single antenna.

The flight period $T_0$ is equally divided into $N_t$ time slot as ${\delta _t} = \frac{T_0}{N_t}$, where $N_t$ is large enough to guarantee that the position of $S$ and $J$ can be seen as approximately fixed during each time slot \cite{WuQ2018TWC}.
The locations of all nodes in the system model are described using Cartesian coordinate system. 
It is assumed that ${S}$ and ${J}$ fly at a fixed altitude ${H_S}$ and ${H_J}$, respectively.
The horizontal coordinate of $U_k$ and $W_m$ are expressed as 
${{\bf{q}}_{{U_k}}} = {\left[{{x_{{U_k}}},{y_{{U_k}}}} \right]^T}$
and ${{\bf{q}}_{{W_m}}} = {\left[ {{x_{{W_m}}},{y_{{W_m}}}} \right]^T}$, respectively. 
The horizontal locations of $S$ and $J$ are denoted as 
${{\mathbf{q}}_S}{\left( n \right)}= {\left[ {{x_S}{\left( n \right)},{y_S}{\left( n \right)}} \right]^T}$ 
and 
${{\mathbf{q}}_J}{\left( n \right)}= {\left[ {{x_J}{\left( n \right)},{y_J}{\left( n \right)}} \right]^T}$.

It assumed that the A2G links are LoS \cite{WangY2021TCCN} and \cite{ZhangR2021TWC}. 
When $W_m$ is equipped with a single antenna $\left( {K = 1} \right)$, 
the channel coefficients from $S$ to the receivers are expressed as
\begin{equation}
	{h_{S{U_k}}}{\left( n \right)}= \sqrt {\frac{{{\rho _0}}}{{{{\left\| {{{\bf{q}}_S}\left( n \right) - {{\bf{q}}_{{U_k}}}} \right\|}^2} + H_S^2}}},
	\label{4}
\end{equation}
and
\begin{equation}
	{h_{S{W_m}}}{\left( n \right)} = \sqrt {\frac{{{\rho _0}}}{{{{\left\| {{{\bf{q}}_S}\left( n \right) - {{\bf{q}}_{{W_m}}}} \right\|}^2} + H_S^2}}},
	\label{5a}
\end{equation}
respectively,
where ${{\rho _0}}$ denotes the channel power gain at the reference distance.
Similarly, the channel coefficients between $J$ and all the receivers at the $n$th slot are expressed as
\begin{equation}
	{h_{J{U_k}}}{\left( n \right)} = \sqrt {\frac{{{\rho _0}}}{{{{\left\| {{{\mathbf{q}}_J}{\left( n \right)}- {{\bf{q}}_{{U_k}}}} \right\|}^2} + H_J^2}}}, \label{6}
\end{equation}
and
\begin{equation}
	{h_{J{W_m}}}{\left( n \right)} = \sqrt {\frac{{{\rho _0}}}{{{{\left\| {{{\mathbf{q}}_J}{\left( n \right)}- {{\bf{q}}_{{W_m}}}} \right\|}^2} + H_J^2}}}, \label{7a}
\end{equation}
respectively.

To take into account the uncertainty of wardens' position in practical situations, the position of $W_m$ at $S$'s point of view is denoted as \cite{JiangX2021TVT}, \cite{WangC2023TCOM}
\begin{equation}
	{{\bf{q}}_{{W_m}}} = {{\bf{\hat q}}_{{W_m}}} + \Delta {{\bf{q}}_{{W_m}}},
	\label{}
\end{equation}
where ${{\bf{\hat q}}_{{W_m}}}$ is the estimated position of $W_m$ and $\Delta {{\bf{q}}_{{W_m}}}$ denotes the estimation error follows
\begin{equation}
	{\left\| {\Delta {{\bf{q}}_{{W_m}}}} \right\|^2} \le r_m^2,
	\label{}
\end{equation}
where ${r_m}$ is the estimation error radius for $W_m$.

\subsection{Detection Performance}

The DEP is given as 
${\xi _m}\left( n \right) = {P_{{\rm{FA}}_m}}{\left( n \right)}+ {P_{{\rm{MD}}_m}}{\left( n \right)}$, where 
${P_{{\rm{FA}}_m}}{\left( n \right)}$ and ${P_{{\rm{MD}}_m}}{\left( n \right)}$ as $W_m$ denote the false alarm probability (FAP) and miss detection probability (MDP) respectively.
To maximize the covert performance of the UAV system, the DEP of all the wardens should satisfy 
\begin{equation}
	{\xi _m}\left( n \right) \ge 1 - \varepsilon ,\forall n,\forall m, 
	\label{DEP}
\end{equation}
where $0 \le \varepsilon  \le 1$ is an sufficiently small positive value to reflect the requirement for covertness in the communication system.

Similar to \cite{JiangX2021TVT}, assuming that all wardens can only have a finite number of observations, they depend on $I$ times sensing of the received signal to determine whether $S$ is transmitting or not. 
The $i$th $\left( {i = 1, \cdots ,I} \right)$ received signal on $W_m$ is denoted as 
	\begin{equation}
	y_{{W_m}}^i\left( n \right) = \left\{ {\begin{array}{*{20}{c}}
			{\sqrt {{P_J}} {h_{J{W_m}}}\left( n \right)x_J^i\left( n \right) + n_{{W_m}}^i\left( n \right)}&{{{\cal H}_0},}\\
			{\sqrt {{P_S}\left( n \right)} {h_{S{W_m}}}\left( n \right)x_S^i\left( n \right) + \sqrt {{P_J}} {h_{J{W_m}}}\left( n \right)x_J^i\left( n \right) + n_{{W_m}}^i\left( n \right)}&{{{\cal H}_1},}
	\end{array}} \right.
	\label{9}
\end{equation}
where 
${{\mathcal{H}_0}}$ and ${{\mathcal{H}_1}}$ represent the hypothesis that $S$ is not transmitting and transmitting, respectively, 
${{P_S}{\left( n \right)}}$ represents $S$ transmit power, 
$x_S^i{\left( n \right)}$ and $x_J^i{\left( n \right)}$ are the $i$th transmitted signal following complex Gaussian distribution with zero and unit variance,
and 
$n_{{W_m}}^i{\left( n \right)}$ is the complex additive Gaussian noise at $W_m$. 
\setcounter{equation}{8} 
The $i$th received signal at the $m$th warden follows \cite{DeGrootMH2011Book}
\begin{equation}
 	y_{{W_m}}^i{\left( n \right)}\sim \left\{ {\begin{array}{*{20}{l}}
 			{CN\left( {0,\sigma _0^2} \right),}&{{\mathcal{H}_0},} \\ 
 			{CN\left( {0,\sigma _1^2} \right),}&{{\mathcal{H}_1},} 
 	\end{array}} \right. \label{10}
\end{equation}
where $\sigma _0^2= {P_J}{\left| {{h_{J{W_m}}}{\left( n \right)}} \right|^2} + {\sigma ^2}$, 
$\sigma _1^2 = {P_S}{\left( n \right)}{\left| {{h_{S{W_m}}}{\left( n \right)}} \right|^2} + {P_J}{\left| {{h_{J{W_m}}}{\left( n \right)}} \right|^2} + {\sigma ^2}$, 
and ${\sigma ^2}$ is the noise power.
Then the observed matrix at $W_m$ is expressed as
\begin{equation}
	{{\mathbf{y}}_{{W_m}}}{\left( n \right)}= {\left[ {y_{{W_m}}^1{\left( n \right)}, \cdots ,y_{{W_m}}^I{\left( n \right)}} \right]^T} \in {\mathbb{C}^I}. \label{11}
\end{equation}

The worst-scenario is considered where $W_m$ adopts the likelihood ratio test for minimizing its DEP, which is expressed as \cite{WangC2022TWC}, 
\begin{equation}
	\frac{{f\left( {{{\mathbf{y}}_{{W_m}}}{\left( n \right)}|{\mathcal{H}_1}} \right)}}{{f\left( {{{\mathbf{y}}_{{W_m}}}{\left( n \right)}|{\mathcal{H}_0}} \right)}}\mathop {\mathop  \gtrless \limits_{{\mathcal{D}_0}} }\limits^{{\mathcal{D}_1}} 1, 
	\label{12}
\end{equation}
where ${\mathcal{D}_0}$ and ${\mathcal{D}_1}$ represent the $W_m$'s decision, 
$f\left( {{{\mathbf{y}}_{{W_m}}}{\left( n \right)}|{\mathcal{H}_0}} \right)$ and $f\left( {{{\mathbf{y}}_{{W_m}}}{\left( n \right)}|{{\mathcal{H}_1}}} \right)$ are the likelihood function under ${\mathcal{H}_0}$ and ${\mathcal{H}_1}$, respectively, which are expressed as
\begin{equation}
	f\left( {{{\mathbf{y}}_{{W_m}}}{\left( n \right)}|{{\mathcal{H}_j}}} \right) = \frac{1}{{{{\left( {\pi \sigma _j^2} \right)}^I}}}\prod\limits_{i = 1}^I {\exp \left( { - \frac{{{{\left| {y_{{W_m}}^i{\left( n \right)}} \right|}^2}}}{{\sigma _j^2}}} \right)},
	\label{13}
\end{equation}
where 
$j \in \left\{ {0,1} \right\}$.
Following ({\ref{12}}) and ({\ref{13}}), the log-likelihood ratio is obtained as
\begin{equation}
	\begin{aligned}
		{\ell _m}{\left( n \right)}&= \ln \left( {\frac{{f\left( {{{\mathbf{y}}_{{W_m}}}{\left( n \right)}|{{\mathcal{H}_1}}} \right)}}{{f\left( {{{\mathbf{y}}_{{W_m}}}{\left( n \right)}|{{\mathcal{H}_0}}} \right)}}} \right) \\
		&= I\ln \left( {\frac{{\sigma _0^2}}{{\sigma _1^2}}} \right) + \left( {\frac{1}{{\sigma _0^2}} - \frac{1}{{\sigma _1^2}}} \right)\sum\limits_{i = 1}^I {{{\left| {y_{{W_m}}^i{\left( n \right)}} \right|}^2}},
		\label{15}
	\end{aligned}
\end{equation}
and the optimal decision rule at $W_m$ is given by
\begin{equation}
	{\ell _m}{\left( n \right)}\mathop {\mathop  \gtrless \limits_{{\mathcal{D}_0}} }\limits^{{\mathcal{D}_1}} 0  \Leftrightarrow  {E_m}\left( n \right) \mathop {\mathop  \gtrless \limits_{{\mathcal{D}_0}} }\limits^{{\mathcal{D}_1}}  \zeta,
	\label{optimaldecisionrule}
\end{equation}
respectively, 
where 
${E_m}\left( n \right) = \sum\limits_{i = 1}^I {{{\left| {y_{{W_m}}^i\left( n \right)} \right|}^2}} $ 
signify the sum power of each received symbol at $W_m$ and  
$\zeta = I\ln \left( {\frac{{\sigma _1^2}}{{\sigma _0^2}}} \right){\left( {\frac{1}{{\sigma _0^2}} - \frac{1}{{\sigma _1^2}}} \right)^{ - 1}}$
denotes the dection threshold for ${E_m}\left( n \right)$. 
Based on \cite{ShahzadK2019TVT}, 
${E_m}\left( n \right) \sim \frac{{\sigma _j^2}}{2}\chi$,
where 
$\chi $ represents denotes a chi-squared random variable (RV) with $2I$.
Then, following ({\ref{optimaldecisionrule}}), ${P_{{\rm{FA}}_m}}{\left( n \right)}$ is obtained as
\begin{equation}
	\begin{aligned}
		{P_{{\rm{FA}}_m}}{\left( n \right)}& = \Pr \left\{ {{E_m}\left( n \right) \ge \zeta |{{\cal H}_0}} \right\}\\
		&= \Pr \left\{ {\chi  \ge \frac{{2\zeta }}{{\sigma _0^2}}} \right\}\\
		&= 1 - \frac{1}{{\Gamma \left( I \right)}}\Upsilon \left( {I,\frac{\zeta }{{\sigma _0^2}}} \right),
		\label{20}
	\end{aligned}
\end{equation}
where $\Upsilon  \left( {z,x} \right) = \int_ 0^x {{t^{z - 1}}{e^{ - t}}dt}$ is the lower incomplete Gamma function, as defined by \cite[(8.350.1)]{Gradshteyn2007Book}. 
With the same method, we have
\begin{equation}
		{P_{{\rm{MD}}_m}}{\left( n \right)} = \frac{1}{{\Gamma \left( I \right)}}\Upsilon \left( {I,\frac{\zeta }{{\sigma _1^2}}} \right).
		\label{21}
\end{equation}
Thus, the DEP of $W_m$ is obtained as
\begin{equation}
	{\xi _m}\left( n \right) = 1 + \frac{1}{{\Gamma \left( I \right)}}\left( {\Upsilon \left( {I,\frac{\zeta }{{\sigma _1^2}}} \right) - \Upsilon \left( {I,\frac{\zeta }{{\sigma _0^2}}} \right)} \right).
	\label{DEP1}
\end{equation}

\subsection{Problem Formulation}
Assume that the interference signals emitted by $J$ is a Gaussian pseudo-random sequence and can be perfect deleted from the received signals at all the ground user, like \cite{LvL2019TIFS}, \cite{CaiY2018JSAC, XingH2016TVT}. 
Then the achievable rate of $U_k$ is expressed as
\begin{equation}
	{R_k}{\left( n \right)}= {\log _2}\left( {1 + {\gamma _k}\left( n \right)} \right),
	\label{37}
\end{equation}
where 
${\gamma _k}\left( n \right) = {\rho _S}\left( n \right){\left| {{h_{S{U_k}}}\left( n \right)} \right|^2}$
and 
${\rho _S}\left( n \right) = \frac{{{P_S}\left( n \right)}}{{{\sigma ^2}}}$. 
Then the average achievable covert rate of $U_k$ is expressed as 
\begin{equation}
	{\bar R_k} = \frac{1}{N}\sum\limits_{n = 1}^N {{R_k}\left( n \right)}.
	\label{H232}
\end{equation}

In this work, the minimum average covert rate is maximized with respect to the trajectory and transmit power of $S$ and the trajectory of $J$.
Let ${{\mathbf{P}}_S} = \left\{ {{P_S}{\left( n \right)},\forall n} \right\}$, ${{\mathbf{Q}}_S} = \left\{ {{{\mathbf{q}}_S}{\left( n \right)},\forall n} \right\}$, ${{\mathbf{Q}}_J} = \left\{ {{{\mathbf{q}}_J}{\left( n \right)},\forall n} \right\}$, we have the following optimaiztion problem 
\begin{subequations}
	\begin{align}
		\mathcal{P}_{1}: &\mathop {\max }\limits_{{{\mathbf{P}}_S},{{\mathbf{Q}}_S},{{\mathbf{Q}}_J} } \; \mathop {\min }\limits_{1 \le k \le K} {{\bar R}_k} \label{P1a} \\
		{\mathrm{s.t.}}\; & {\xi _m}\left( n \right) \ge 1 - \varepsilon ,\forall n,\forall m, \label{P1b}\\
		& {P_S}\left( n \right) \le {P_{\max }},\forall n, \label{P1c}\\
		&{{\bf{q}}_S}\left( 1 \right) = {{\bf{q}}_S^{\rm{I}}},{{\bf{q}}_S}\left( N \right) = {{\bf{q}}_S^{\rm{F}}}, \label{P1d}\\
		&{{\bf{q}}_J}\left( 1 \right) = {{\bf{q}}_J^{\rm{I}}},{{\bf{q}}_J}\left( N \right) = {{\bf{q}}_J^{\rm{F}}}, \label{P1e}\\
		&\left\| {{{\bf{q}}_S}\left( {n + 1} \right) - {{\bf{q}}_S}\left( n \right)} \right\| \le {V_{S,\max }}{\delta _t},n = 1, \cdots ,{N_t} - 1, \label{P1f}\\
		&\left\| {{{\bf{q}}_J}\left( {n + 1} \right) - {{\bf{q}}_J}\left( n \right)} \right\| \le {V_{J,\max }}{\delta _t},n = 1, \cdots ,{N_t} - 1, \label{P1g}
	\end{align}
\end{subequations}
where 
(\ref{P1b}) is the covert constraint, 
(\ref{P1c}) is the peak power constraint of S where $P_{max}$ signifies maximum transmit power of $S$, 
(\ref{P1d}) and (\ref{P1e}) denote the constraints on the take-off and landing positions of $S$ and $J$, respectively, where 
${{\bf{q}}_S^{\rm{I}}}$ and ${{\bf{q}}_S^{\rm{F}}}$ are the take-off and landing positions of $S$, respectively, 
and 
${{\bf{q}}_J^{\rm{I}}}$ and ${{\bf{q}}_J^{\rm{F}}}$ are the take-off and landing positions of $J$, respectively, 
(\ref{P1f}) and (\ref{P1g}) depicts the maximum flight distance between adjacent time slots of both UAVs 
where 
${V_{S,\max }}$ and ${V_{J,\max }}$ denote the maximum velocity of $S$ and $J$, respectively.

Problem $\mathcal{P}_{1}$ is 
a multivariate coupled non-convex problem. 
Several factors make $\mathcal{P}_{1}$ challenging to solve. 
First, the objective function ${\bar R_k}$ in (\ref{P1a}) is a non-convex with respect to variable ${{\mathbf{P}}_S}$, ${{\mathbf{Q}}_S}$ and ${{\mathbf{Q}}_J}$.
Second, the covert constraint (\ref{P1b}) is a non-convex constraint and it has highly complicated with respect to all the optimization variables.
Moreover, (\ref{P1b}) contains location estimate for all the wardens, which will change the number of constraints from finite to infinite. 

In the following section, a SCA-based algorithm is proposed for handling the non-convexity of problem $\mathcal{P}_{1}$.

\section{Covert Rate Maximization}
\label{sec:CovertRateMaximization}

\subsection{Tractable Covertness Constraint}

Due to the complexity of constraint (\ref{P1b}) and the coupled variables, it is impossible to perform a direct convexity operation on $\mathcal{P}_{1}$. 
The following Lemma provides an alternative to obtain a simpler equivalent form.

\begin{lemma}
	The DEP of $W_m$ is obtained as monotonically decreasing respect to 
	${{\gamma _{m,1}}\left( n \right)} = \frac{{{P_S}\left( n \right){{\left| {{h_{S{W_m}}}\left( n \right)} \right|}^2}}}{{{P_J}{{\left| {{h_{J{W_m}}}\left( n \right)} \right|}^2} + {\sigma ^2}}}$.
\end{lemma}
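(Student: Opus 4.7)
The plan is to reduce the DEP to a one-parameter function of $t \triangleq \sigma_1^2/\sigma_0^2$ and then exploit the Neyman--Pearson optimality of the likelihood-ratio test together with an envelope-theorem argument. Since $t = 1 + \gamma_{m,1}(n)$ is a strictly increasing function of $\gamma_{m,1}(n)$, it suffices to prove that $\xi_m(n)$ is strictly decreasing in $t$ for $t > 1$.

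First I would introduce the scaled threshold $a \triangleq \zeta/\sigma_0^2$, which automatically gives $\zeta/\sigma_1^2 = a/t$. Substituting into (\ref{DEP1}) reduces the DEP to the two-variable object
\begin{equation*}
\Xi(a,t) \triangleq 1 + \frac{1}{\Gamma(I)}\bigl[\Upsilon(I,a/t) - \Upsilon(I,a)\bigr],
\end{equation*}
and the particular $\zeta$ prescribed by (\ref{optimaldecisionrule}) corresponds to $a = a^{\ast}(t) = It\ln(t)/(t-1)$. I would then observe that the LRT in (\ref{12}) with unit likelihood-ratio threshold is the Bayes-optimal test under equal priors, so $a^{\ast}(t)$ is in fact the minimizer of $\Xi(\cdot,t)$ over $a > 0$, and $\xi_m(n) = \Xi(a^{\ast}(t),t)$ is the corresponding value function.

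By the envelope theorem at the interior minimizer, together with $\partial \Upsilon(I,x)/\partial x = x^{I-1}e^{-x}$,
\begin{equation*}
\frac{d\xi_m(n)}{dt} = \left.\frac{\partial \Xi(a,t)}{\partial t}\right|_{a=a^{\ast}(t)} = -\frac{a^{\ast}(t)}{t^{2}\,\Gamma(I)} \left(\frac{a^{\ast}(t)}{t}\right)^{I-1} \exp\!\left(-\frac{a^{\ast}(t)}{t}\right).
\end{equation*}
Since $a^{\ast}(t) > 0$ for $t > 1$, the right-hand side is strictly negative, so $\xi_m(n)$ is strictly decreasing in $t$ and hence strictly decreasing in $\gamma_{m,1}(n)$.

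The step I expect to be the main obstacle is justifying the envelope identity cleanly, i.e.\ confirming that $a^{\ast}(t)$ is genuinely a minimizer of $\Xi(\cdot,t)$ on $(0,\infty)$ rather than merely a critical point; this can be done either by the Bayes-optimality argument above or by a direct second-order check. If the envelope route is felt to be too indirect, a hands-on fallback is to apply the full chain rule to $\Xi(a^{\ast}(t),t)$: the term multiplying $da^{\ast}/dt$ cancels thanks to the first-order condition $(a/t)^{I-1}e^{-a/t} = t\,a^{I-1}e^{-a}$ that defines $a^{\ast}(t)$, recovering exactly the same strictly negative expression.
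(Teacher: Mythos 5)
Your proof is correct, but it takes a genuinely different route from the paper's. The paper substitutes the optimal threshold directly into the DEP to get $\xi_m(n) = 1 - \frac{1}{\Gamma(I)}\int_{L(\gamma)}^{U(\gamma)} t^{I-1}e^{-t}\,dt$ with $L(\gamma) = \frac{I\ln(1+\gamma)}{\gamma}$ and $U(\gamma) = \frac{I(1+\gamma)\ln(1+\gamma)}{\gamma}$, and then argues that $L$ is decreasing and $U$ is increasing in $\gamma$ (via the standard bounds $\frac{\gamma}{1+\gamma} \le \ln(1+\gamma) \le \gamma$), so the integration interval is nested and expanding and the integral of the nonnegative integrand grows --- hence the DEP falls. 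Your argument instead freezes the threshold as a free parameter $a$, invokes Neyman--Pearson/Bayes optimality to identify $a^{\ast}(t)$ as the minimizer of $\Xi(\cdot,t)$, and differentiates the value function; the envelope identity then yields an explicitly negative derivative. Each approach has its merits: the paper's is more elementary (no appeal to test optimality) but requires checking the monotonicity of two limit functions and is stated somewhat sloppily there (the text literally asserts the limits themselves are $\le 0$ and $\ge 0$ when it means their derivatives); yours gives a cleaner closed-form expression for $d\xi_m/dt$ and makes the statistical structure transparent. One remark: your worry about confirming that $a^{\ast}(t)$ is a true minimizer rather than merely a critical point is unnecessary for the monotonicity conclusion --- your own fallback via the full chain rule only uses the first-order condition $\partial_a\Xi(a^{\ast}(t),t)=0$, which is a direct computation, so the Bayes-optimality argument can be dispensed with entirely if one only wants the sign of the derivative.
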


\begin{proof}
Based on the definition of $\zeta $ and ${\sigma _j^2}$, we obtain
\begin{equation}
	\frac{\zeta }{{\sigma _0^2}} = \frac{{I\left( {1 + {\gamma _{m,1}}\left( n \right)} \right)\ln \left( {1 + {\gamma _{m,1}}\left( n \right)} \right)}}{{{\gamma _{m,1}}\left( n \right)}}
	\label{}
\end{equation}
and 
\begin{equation}
	\frac{\zeta }{{\sigma _1^2}} = \frac{{I\ln \left( {1 + {\gamma _{m,1}}\left( n \right)} \right)}}{{{\gamma _{m,1}}\left( n \right)}},
	\label{}
\end{equation}
where
${{\gamma _{m,1}}\left( n \right)} = \frac{{{P_S}\left( n \right){{\left| {{h_{S{W_m}}}\left( n \right)} \right|}^2}}}{{{P_J}{{\left| {{h_{J{W_m}}}\left( n \right)} \right|}^2} + {\sigma ^2}}}$ 
denotes the SINR of $W_m$. 
Then based on $\Upsilon  \left( {z,x} \right) = \int_ 0^x {{t^{z - 1}}{e^{ - t}}dt}$, the DEP is rewritten as
\begin{equation}
	{\xi _m}\left( n \right) = 1 - \frac{1}{{\Gamma \left( I \right)}}\int_{\frac{{I\ln \left( {1 + {\gamma _{m,1}}\left( n \right)} \right)}}{{{\gamma _{m,1}}\left( n \right)}}}^{\frac{{I\left( {1 + {\gamma _{m,1}}\left( n \right)} \right)\ln \left( {1 + {\gamma _{m,1}}\left( n \right)} \right)}}{{{\gamma _{m,1}}\left( n \right)}}} {{t^{I - 1}}{e^{ - t}}dt}.
	\label{40}
\end{equation}
One can observe that the 
${\frac{{I\ln \left( 1 + {{{\gamma _{m,1}}\left( n \right)}} \right)}}{{{{\gamma _{m,1}}\left( n \right)}}}}  \le 0$ and ${\frac{{I\left( {1 + {{\gamma _{m,1}}\left( n \right)}} \right)\ln \left( {1 + {{\gamma _{m,1}}\left( n \right)}} \right)}}{{{{\gamma _{m,1}}\left( n \right)}}}}  \ge 0$, 
which signifies 
are monotonically decreasing or increasing respect to ${{{\gamma _{m,1}}\left( n \right)} > 0}$. 
Thus, ${\xi _m}\left( n \right)$ is monotonically decreasing respect to ${{{\gamma _{m,1}}\left( n \right)} }$.
\end{proof}

Since a given covertness requirement must to be satisfied, given in (\ref{DEP}), then we have
\begin{equation}
	{{\gamma _{m,1}}\left( n \right)} \le {\gamma _{\max ,1}},
	\label{snr1Wm}
\end{equation}
where ${\gamma _{\max ,1}} $ is maximum value of ${{\gamma _{m,1}}\left( n \right)}$, which can be obtained by utilizing bisection search method.

\subsection{SCA-Based Joint Optimization Algorithm}

By replace ({\ref{P1b}}) for ({\ref{snr1Wm}}) and introduce a slack variable $\eta$, the problem $\mathcal{P}_{1}$ can be equivalently transformed into a more tractable formulation, namely, 
\begin{subequations}
	\begin{align}
		\mathcal{P}_{1.1}: &\mathop {\max }\limits_{{{\mathbf{P}}_S},{{\mathbf{Q}}_S},{{\mathbf{Q}}_J},\eta } \;\eta \label{P1.1a}\\
		{\mathrm{s.t.}}\; &\eta \le \frac{1}{N}\sum\limits_{n = 1}^N {{R_k}{\left( n \right)}} ,\forall k, \label{P1.1b}\\
		&{{\gamma _{m,1}}\left( n \right)} \le {\gamma _{\max ,1}},\forall n,\forall m, \label{P1.1c}\\
		&(\textrm{\ref{P1c}}) - (\textrm{\ref{P1g}}). \label{P1.1d}
	\end{align}
\end{subequations}
It should be noted that ({\ref{P1.1b}}) and ({\ref{P1.1c}}) are non-convex.

To cope with the non-convexity of  (\ref{P1.1b}), 
the lower bound of ${R_k}\left( n \right) $ is obtained as 
\begin{equation}
	\begin{aligned}
		{R_k}\left( n \right) &\ge {\log _2}\left( {1 + \frac{{{\gamma _0}{P_S}\left( n \right)}}{{{d_k}\left( n \right)}}} \right) \\
		&= {\log _2}\left( {{\gamma _0}{P_S}\left( n \right) + {d_k}\left( n \right)} \right) - {\log _2}\left( {{d_k}\left( n \right)} \right)\\
	 	&\mathop  \ge \limits^{\left( a \right)}  {\log _2}\left( {{\gamma _0}{P_S}\left( n \right) + {d_k}\left( n \right)} \right) - {\log _2}\left( {d_k^l\left( n \right)} \right)  \\
	 	&- \frac{{{d_k}\left( n \right) - d_k^l\left( n \right)}}{{d_k^l\left( n \right)\ln \left( 2 \right)}}\\
		&  \buildrel \Delta \over = R_k^{\rm{L},1}\left( n \right),
		\label{P1.1b2}
	\end{aligned}
\end{equation}
where 
${d_k}\left( n \right)$ is a new slack variable, which must satisfy
\begin{equation}
	{d_k}\left( n \right) \ge {\left\| {{{\bf{q}}_S}\left( n \right) - {{\bf{u}}_k}} \right\|^2} + H_S^2,\forall n,\forall k,
	\label{P1.1b3}
\end{equation}
step $\left( a \right)$ is obtained by SCA, 
$d_k^l\left( n \right)$ is a feasible point derived from the $l$th iteration. 
Then, 
(\ref{P1.1b}) is approximately transformed into
\begin{equation}
	\eta \le \frac{1}{N}\sum\limits_{n = 1}^N {R_k^{\rm{L},1}\left( n \right)} ,\forall k, 
	\label{P1.1b4}
\end{equation}

Similarly, 
a slack variable 
${b_m}\left( n \right)$ 
is introduced for handling the non-convexity of (\ref{P1.1c}), which is reformulated as
\begin{equation}
	\frac{{{P_S}\left( n \right){\rho _0}}}{{{{\left\| {{{\bf{q}}_S}\left( n \right) - {{\bf{q}}_{{W_m}}}} \right\|}^2} + H_S^2}} \le {\gamma _{\max ,1}}{b_m}\left( n \right),\forall n,\forall m,
	\label{P1.1c1}
\end{equation}
where 
${b_m}\left( n \right)$ satisfy
\begin{equation}
	{b_m}\left( n \right) \le \frac{{{P_J}{\rho _0}}}{{{{\left\| {{{\bf{q}}_J}\left( n \right) - {{\bf{q}}_{{W_m}}}} \right\|}^2} + H_J^2}} + {\sigma ^2},\forall n,\forall m.
	\label{P1.1c3}
\end{equation}
By introducing slack variables 
${c_m}\left( n \right)$ 
and 
${v_m}\left( n \right)$ 
the constraint is further reformulated as
\begin{subequations}
	\begin{align}
		\frac{{{\rho _0}{P_S}\left( n \right)}}{{{v_m}\left( n \right)}} &\le {\gamma _{\max ,1}}{b_m}\left( n \right),\forall n,\forall m, \label{eq:1c3a}\\
		{b_m}\left( n \right) &\le \frac{{{\rho _0}{P_J}}}{{{c_m}\left( n \right)}} + {\sigma ^2},\forall n,\forall m, \label{eq:1c3b}\\
		{c_m}\left( n \right) &\ge {\left\| {{{\bf{q}}_J}\left( n \right) - {{\bf{q}}_{{W_m}}}} \right\|^2} + H_J^2,\forall n,\forall m, \label{eq:1c3c}\\
		{v_m}\left( n \right) &\le {\left\| {{{\bf{q}}_S}\left( n \right) - {{\bf{q}}_{{W_m}}}} \right\|^2} + H_S^2,\forall n,\forall m, \label{eq:1c3d}
	\end{align}
\end{subequations}
To handle the mutually coupled variables, (\ref{eq:1c3a}) is written as
\begin{equation}
	\ln \left( {{\rho _0}{P_S}\left( n \right)} \right) \le \ln \left( {{\gamma _{\max ,1}}{b_m}\left( n \right)} \right) + \ln \left( {{v_m}\left( n \right)} \right),\forall n,\forall m.
	\label{eq:1c3a2}
\end{equation}
By utilizing SCA, we obtain 
\begin{equation}
	\begin{aligned}
		\ln \left( {{\rho _0}{P_S}\left( n \right)} \right) &\le \ln \left( {{\rho _0}P_S^l\left( n \right)} \right) + \frac{{{P_S}\left( n \right) - P_S^l\left( n \right)}}{{P_S^l\left( n \right)}}  \buildrel \Delta \over = A\left( n \right),
		\label{eq:1c3a3}
	\end{aligned}
\end{equation}
\begin{equation}
	\begin{aligned}
		\frac{1}{{{c_m}\left( n \right)}} &\ge \frac{1}{{c_m^l\left( n \right)}} - \frac{1}{{{{\left( {c_m^l\left( n \right)} \right)}^2}}}\left( {{c_m}\left( n \right) - c_m^l\left( n \right)} \right) \buildrel \Delta \over = B\left( n \right),
		\label{eq:1c3b2}
	\end{aligned}
\end{equation}
respectively, 
where 
$P_S^l\left( n \right)$ 
and 
$c_m^l\left( n \right)$ are feasible point obtained by the $l$th iteration, respectively, 
Then (\ref{eq:1c3a}) and (\ref{eq:1c3b}) are approximately transformed into
\begin{equation}
	A\left( n \right) \le \ln \left( {{\gamma _{\max ,1}}{b_m}\left( n \right)} \right) + \ln \left( {{v_m}\left( n \right)} \right),\forall n,\forall m,
	\label{eq:1c3a4}
\end{equation}
and 
\begin{equation}
	{b_m}\left( n \right) \le {\rho _0}{P_J}{B_m}\left( n \right) + {\sigma ^2},\forall n,\forall m,
	\label{eq:1c3b3}
\end{equation}
respectively.

It must be noted that constraints (\ref{eq:1c3c}) and (\ref{eq:1c3d}) still intractable due to $\Delta {{\bf{q}}_{{W_m}}}$.
The triangle inequality is utilized to obtain the upper bound, which is expressed as \cite{LeiH2023IoTUAV}
\begin{equation}
	\begin{aligned}
		\left\| {{{\bf{q}}_J}\left( n \right) - {{\bf{q}}_{{W_m}}}} \right\| &= \left\| {{{\bf{q}}_J}\left( n \right) - {{{\bf{\hat q}}}_{{W_m}}} - \Delta {{\bf{q}}_{{W_m}}}} \right\|\\
		&\le \left\| {{{\bf{q}}_J}\left( n \right) - {{{\bf{\hat q}}}_{{W_m}}}} \right\| + \left\| {\Delta {{\bf{q}}_{{W_m}}}} \right\|\\
		&\le \left\| {{{\bf{q}}_J}\left( n \right) - {{{\bf{\hat q}}}_{{W_m}}}} \right\| + {r_m}.
		\label{wmerror}
	\end{aligned}
\end{equation}
Then the constraint ({\ref{eq:1c3c}}) is approximated by
\begin{equation}
	{c_m}{\left( n \right)} \ge {\left( {\left\| {{{\mathbf{q}}_J}{\left( n \right)}- {{{\bf{\hat q}}}_{{W_m}}}} \right\| + {r_m}} \right)^2} + H_J^2,\forall n,\forall m, \label{eq:1c3c2}
\end{equation}
By introducing slack variables 
${{\theta _m}\left( n \right)}$ 
and applying the S-procedure technique, constraint ({\ref{eq:1c3d}}) is equivalently transformed into 
	\begin{subequations}
	\begin{align}
		\left( {\begin{array}{*{20}{c}}
				{\left( {1 + {\theta _m}{\left( n \right)}} \right){\mathbf{I}}}&{ - \left( {{{\mathbf{q}}_S}{\left( n \right)}- {{{\bf{\hat q}}}_{{W_m}}}} \right)} \\ 
				{ - {{\left( {{{\mathbf{q}}_S}{\left( n \right)}- {{{\bf{\hat q}}}_{{W_m}}}} \right)}^T}}&{{{\left\| {{{\mathbf{q}}_S}{\left( n \right)}- {{{\bf{\hat q}}}_{{W_m}}}} \right\|}^2} + {\iota _m}{\left( n \right)}} 
		\end{array}} \right) \succcurlyeq 0, \forall n,\forall m, \label{eq:1c3d1} \\
		{\theta _m}\left( n \right) \ge 0,\forall n,\forall m, \label{eq:1c3d2}
	\end{align}
\end{subequations}
where 
${\mathbf{I}}$ denotes the $2 \times 2$ identity matrix,
${\iota _m}{\left( n \right)}\triangleq H_S^2 - {v_m}{\left( n \right)}- {\theta _m}{\left( n \right)}r_m^2$. 
\setcounter{equation}{39} 
By utilizing SCA, we obtain 
\begin{equation}
	\begin{aligned}
		{\left\| {{{\bf{q}}_S}\left( n \right) - {{{\bf{\hat q}}}_{{W_m}}}} \right\|^2} &\ge {\left\| {{\bf{q}}_S^l\left( n \right) - {{{\bf{\hat q}}}_{{W_m}}}} \right\|^2} \\
		&+ 2{\left( {{\bf{q}}_S^l\left( n \right) - {{{\bf{\hat q}}}_{{W_m}}}} \right)^T}\left( {{{\bf{q}}_S}\left( n \right) - {\bf{q}}_S^l\left( n \right)} \right)\\
		&\buildrel \Delta \over = {C_m}\left( n \right),
		\label{eq:1c3d3}
	\end{aligned}
\end{equation}
where ${\bf{q}}_S^l\left( n \right)$ is a feasible point derived from the $l$th iteration. Then ({\ref{eq:1c3d1}}) is approximated by
\begin{equation}
	\left( {\begin{array}{*{20}{c}}
			{\left( {1 + {\theta _m}\left( n \right)} \right){\mathbf{I}}}&{ - \left( {{{\mathbf{q}}_S}\left( n \right) - {{{\mathbf{\hat q}}}_{{W_m}}}} \right)} \\ 
			{ - {{\left( {{{\mathbf{q}}_S}\left( n \right) - {{{\mathbf{\hat q}}}_{{W_m}}}} \right)}^T}}&{{C_m}\left( n \right) + {\iota _m}\left( n \right)} 
	\end{array}} \right) \succcurlyeq 0,\forall n,\forall m.
	\label{eq:eq:1c3d4}
\end{equation}

Let 
${\mathbf{b}}  = \left\{ {{b_m}\left( n \right),\forall n,\forall m} \right\}$, 
${\mathbf{c}} = \left\{ {{c_m}{\left( n \right)},\forall n,\forall m} \right\}$, 
${\mathbf{d}} = \left\{ {{d_k}{\left( n \right)},\forall n,\forall k} \right\}$, 
${\mathbf{v}} = \left\{ {{v_m}{\left( n \right)},\forall n,\forall m} \right\}$, 
and  
${\mathbf{\Theta  }}  = \left\{ {{\theta _m}{\left( n \right)},\forall n,\forall m} \right\}$, 
problem $\mathcal{P}_{1.1}$ is rewritten as
\begin{subequations}
	\begin{align}
		\mathcal{P}_{1.2}: &\mathop {\max}  \limits_{{{\mathbf{P}}_S},{{\mathbf{Q}}_S},{{\mathbf{Q}}_J},{\mathbf{b}},{\mathbf{c}}, {\mathbf{v}},{\mathbf{d}},{\mathbf{\Theta }},\eta } \; \eta \\		
		{\mathrm{s.t.}}\; & (\textrm{\ref{P1.1d}}), 
		(\textrm{\ref{P1.1b3}}), 
		(\textrm{\ref{P1.1b4}}), 
		(\textrm{\ref{eq:1c3a4}}), 
		(\textrm{\ref{eq:1c3b3}}), 
		(\textrm{\ref{eq:1c3c2}}), 
		(\textrm{\ref{eq:1c3d2}}), 
		(\textrm{\ref{eq:eq:1c3d4}}). 
		\label{P1.2}
	\end{align}
\end{subequations}
Now $\mathcal{P}_{1.2}$ is a convex problem, and it can be efficiently solved by CVX toolbox \cite{BoydS2004Book}.

\begin{algorithm}[!tb]
	\caption{Proposed Algorithm for Non-Convex Problem $\mathcal{P}_{1}$}
	\KwIn{Initialization of feasible points.}
	\While
	{$R\left( {{\mathbf{Q}}_S^l,{\mathbf{Q}}_J^l,{\mathbf{P}}_S^l} \right) - R\left( {{\mathbf{Q}}_S^{l - 1},{\mathbf{Q}}_J^{l - 1},{\mathbf{P}}_S^{l - 1}} \right) \le {\hat \xi}$}
	{   1. Solve $\mathcal{P}_{1.2}$;\\
		2. $l = l + 1$;\\
		3. Calculate the objective value $R\left( {{\mathbf{Q}}_S^l,{\mathbf{Q}}_J^l,{\mathbf{P}}_S^l} \right)$.
	}
	\KwOut{$R\left( {{\mathbf{Q}}_S^l,{\mathbf{Q}}_J^l,{\mathbf{P}}_S^l} \right)$ with ${\mathbf{Q}}_S^* = {\mathbf{Q}}_S^l,\;{\mathbf{Q}}_J^* = {\mathbf{Q}}_J^l,\;{\mathbf{P}}_S^* = {\mathbf{P}}_S^l$.}
\end{algorithm}

To solve $\mathcal{P}_{1}$, we propose an iterative algorithm 
whose details of the entire iteration are summarized in Algorithm 1, where the tolerance of the convergence of the algorithm is denoted as $\hat \xi$.

\subsection{Convergence and Complexity Analysis}

By following the proof of \cite{MarksBR1978OR}, it is straightforward to show that the solution generated by Algorithm 1 converges to a KKT solution of problem $\mathcal{P}_{1}$.

The computational complexity of our proposed Algorithm 1 lies in solving SDP $\mathcal{P}_{1.2}$. 
Based on \cite[Theorem 1]{BiswasP2006ACMTSN}, the computational complexity for finding an ${\hat \varepsilon}$-optimal solution of problem $\mathcal{P}_{1.2}$ is $\mathcal{O}\left( {\sqrt {3{N_t}{N_W} + {A_t}} \left( {{3^3}{N_t}{N_W} + 9{A_t}{N_t}{N_W} + A_t^2} \right)\ln \frac{1}{{\hat \varepsilon}}} \right)$, 
where ${A_t} = {N_U} + 4{N_t}{N_W} + {N_t}{N_U} + 3{N_t} + 6$ is the number of constraints that is not a SDP cone. 
Then, the computational complexity of Algorithm 1 is ${N_{num}}\mathcal{O}\left( {\sqrt {3{N_t}{N_W} + {A_t}} \left( {{3^3}{N_t}{N_W} + 9{A_t}{N_t}{N_W} + A_t^2} \right)\ln \frac{1}{{\hat \varepsilon }}} \right)$, 
where ${N_{num}}$ is the maximal iteration number of Algorithm 1.

\section{All The Wardens Equipped with Multiple Antennas}
\label{sec:AllTheWardensEquippedwithMultipleAntennas}

When all the wardens equipped with multiple antennas $\left( {K > 1} \right)$, the channel coefficients between the transmitters and $W_m$ at the $n$th slot are expressed as
\begin{equation}
	{{\mathbf{h}}_{S{W_m}}}{\left( n \right)}= {\left[ {{h_{S{W_m}}}{\left( n \right)}, \cdots ,{h_{S{W_m}}}{\left( n \right)}} \right]^T} \in {\mathbb{R}^K}, 
	\label{5b}
\end{equation}
and
\begin{equation}
	{{\mathbf{h}}_{J{W_m}}}{\left( n \right)}= {\left[ {{h_{J{W_m}}}{\left( n \right)}, \cdots ,{h_{J{W_m}}}{\left( n \right)}} \right]^T} \in {\mathbb{R}^K}, 
	\label{7b}
\end{equation}
respectively.

The received signal at $W_m$ as \cite{ShahzadK2019TVT}
\begin{equation}
	{{\mathbf{Y}}_{{W_m}}}{\left( n \right)}= \left[ {{\mathbf{y}}_{{W_m}}^1{\left( n \right)}, \cdots ,{\mathbf{y}}_{{W_m}}^I{\left( n \right)}} \right] \in {\mathbb{C}^{K \times I}}, \label{24}
\end{equation}
where 
${\bf{y}}_{{W_m}}^i\left( n \right)$ is given in 
\begin{equation}
	{\bf{y}}_{{W_m}}^i\left( n \right) = \left\{ {\begin{array}{*{20}{c}}
			{\sqrt {{P_J}} {{\bf{h}}_{J{W_m}}}\left( n \right)x_J^i\left( n \right) + {\bf{n}}_{{W_m}}^i\left( n \right),}&{{{\cal H}_0},}\\
			{\sqrt {{P_S}\left( n \right)} {{\bf{h}}_{S{W_m}}}\left( n \right)x_S^i\left( n \right) + \sqrt {{P_J}} {{\bf{h}}_{J{W_m}}}\left( n \right)x_J^i\left( n \right) + {\bf{n}}_{{W_m}}^i\left( n \right),}&{{{\cal H}_1},}
	\end{array}} \right.
	\label{YWm}
\end{equation}
and 
${\mathbf{n}}_{{W_m}}^i{\left( n \right)}= {\left[ {n_{{W_m}}^i{\left( n \right)}, \cdots ,n_{{W_m}}^i{\left( n \right)}} \right]^T} \in {\mathbb{C}^K}$.
\setcounter{equation}{46} 
The $i$th received signal at $W_m$ follows the following distribution
\begin{equation}
	{\mathbf{y}}_{{W_m}}^i{\left( n \right)}\sim \left\{ {\begin{array}{*{20}{l}}
			{CN\left( {{\mathbf{0}},{{\mathbf{K}}_0}{\left( n \right)}} \right)}&{{\mathcal{H}_0},} \\ 
			{CN\left( {{\mathbf{0}},{{\mathbf{K}}_1}{\left( n \right)}} \right)}&{{\mathcal{H}_1},} 
	\end{array}} \right. \label{23}
\end{equation}
where 
${{\mathbf{K}}_0}{\left( n \right)} = {\sigma ^2}{\mathbf{I}} + {P_J}{{\mathbf{h}}_{J{W_m}}}{\left( n \right)}{\mathbf{h}}_{J{W_m}}^H{\left( n \right)}$
and 
${{\mathbf{K}}_1}{\left( n \right)} = {\sigma ^2}{\mathbf{I}} + {P_J}{{\mathbf{h}}_{J{W_m}}}{\left( n \right)}{\mathbf{h}}_{J{W_m}}^H{\left( n \right)}$\\ $+ {P_S}{\left( n \right)}{{\mathbf{h}}_{S{W_m}}}{\left( n \right)}{\mathbf{h}}_{S{W_m}}^H{\left( n \right)}$.
Then the PDF of ${{\mathbf{Y}}_{{W_m}}}{\left( n \right)}$ is expressed as 
	\begin{equation}
	\begin{aligned}
		f\left( {{{\bf{Y}}_{{W_m}}}\left( n \right)|{{\cal H}_j}} \right) &= \prod\limits_{i = 1}^I {\frac{1}{{{\pi ^K}\det \left( {{{\bf{K}}_j}\left( n \right)} \right)}}\exp \left( { - {\bf{y}}_{{W_m}}^i{{\left( n \right)}^H}{\bf{K}}_j^{ - 1}\left( n \right){\bf{y}}_{{W_m}}^i\left( n \right)} \right)} \\
		&= {\left( {\frac{1}{{{\pi ^K}\det \left( {{{\bf{K}}_j}\left( n \right)} \right)}}} \right)^I}\exp \left( { - {\rm{Tr}}\left( {{\bf{K}}_j^{ - 1}\left( n \right){{\bf{Y}}_{{W_m}}}\left( n \right){{\bf{Y}}_{{W_m}}}{{\left( n \right)}^H}} \right)} \right),
		\label{pdfYWm}
	\end{aligned}
\end{equation}
where 
$\det \left( {{{\bf{K}}_1}\left( n \right)} \right) = \left( {{\sigma ^2} + {P_J}{{\left\| {{{\bf{h}}_{J{W_m}}}\left( n \right)} \right\|}^2} + {P_S}\left( n \right){{\left\| {{{\bf{h}}_{S{W_m}}}\left( n \right)} \right\|}^2}} \right){\sigma ^{2\left( {K - 1} \right)}}$, 
$\det \left( {{{\bf{K}}_0}\left( n \right)} \right) = \left( {{\sigma ^2} + {P_J}{{\left\| {{{\bf{h}}_{J{W_m}}}\left( n \right)} \right\|}^2}} \right){\sigma ^{2\left( {K - 1} \right)}}$, 
${{\bf{K}}_1}\left( n \right) = {\sigma ^2}{\bf{I}} + {{\bf{g}}_m}\left( n \right){\bf{g}}_m^H\left( n \right)$, 
${\bf{K}}_1^{ - 1}\left( n \right) = \frac{1}{{{\sigma ^2}}}\left( {{\bf{I}} - \frac{{{{\bf{g}}_m}\left( n \right){\bf{g}}_m^H\left( n \right)}}{{{\sigma ^2} + {{\left\| {{{\bf{g}}_m}\left( n \right)} \right\|}^2}}}} \right)$,
where ${\mathbf{K}}_0^{ - 1}{\left( n \right)}$, ${\mathbf{K}}_1^{ - 1}{\left( n \right)}$ is obtained using Woodbury Matrix Identity for matrix inversion \cite{StrangG1993Book}, 
${{\mathbf{g}}_m}{\left( n \right)} = {\left[ {{G_m}{\left( n \right)}, \cdots ,{G_m}{\left( n \right)}} \right]^T} \in {\mathbb{R}^K}$, 
and 
${G_m}{\left( n \right)}\triangleq \sqrt {{P_J}{{\left| {{h_{J{W_m}}}{\left( n \right)}} \right|}^2} + {P_S}{\left( n \right)}{{\left| {{h_{S{W_m}}}{\left( n \right)}} \right|}^2}}$.

\setcounter{equation}{48} 
Similarly, the log-likelihood ratio is denoted as
\begin{equation}
	\begin{aligned}
		{\ell _m}{\left( n \right)} = & \ln \left( {\frac{{f\left( {{{\mathbf{Y}}_{{W_m}}}{\left( n \right)}|{\mathcal{H}_1}} \right)}}{{f\left( {{{\mathbf{Y}}_{{W_m}}}{\left( n \right)}|{\mathcal{H}_0}} \right)}}} \right) \\
		=& I\ln \left( {\frac{{\det \left( {{{\mathbf{K}}_0}{\left( n \right)}} \right)}}{{\det \left( {{{\mathbf{K}}_1}{\left( n \right)}} \right)}}} \right) + \frac{1}{{{\sigma ^2}}}\frac{{{{\left\| {{\mathbf{g}}_m^H{\left( n \right)}{{\mathbf{Y}}_{{W_m}}}{\left( n \right)}} \right\|}^2}}}{{{\sigma ^2} + {{\left\| {{{\mathbf{g}}_m}{\left( n \right)}} \right\|}^2}}} - \frac{1}{{{\sigma ^2}}}\frac{{{{\left\| {{\mathbf{h}}_{J{W_m}}^H{\left( n \right)}{{\mathbf{Y}}_{{W_m}}}{\left( n \right)}} \right\|}^2}}}{{\frac{{{\sigma ^2}}}{{{P_J}}} + {{\left\| {{{\mathbf{h}}_{J{W_m}}}{\left( n \right)}} \right\|}^2}}}. \label{29}
	\end{aligned}
\end{equation}
According to ({\ref{optimaldecisionrule}}), the optimal decision rule is rewritten as
\begin{equation}
	{{\Omega _m}\left( n \right)} \mathop {\mathop  \gtrless \limits_{{\mathcal{D}_0}} }\limits^{{\mathcal{D}_1}} \lambda
	\label{30}
\end{equation}
where 
${{\Omega _m}\left( n \right)} = \frac{{{{\left\| {{\bf{g}}_m^H\left( n \right){{\bf{Y}}_{{W_m}}}\left( n \right)} \right\|}^2}}}{{{\sigma ^2}\left( {{\sigma ^2} + {{\left\| {{{\bf{g}}_m}\left( n \right)} \right\|}^2}} \right)}} - \frac{{{{\left\| {{\bf{h}}_{J{W_m}}^H\left( n \right){{\bf{Y}}_{{W_m}}}\left( n \right)} \right\|}^2}}}{{{\sigma ^2}\left( {\frac{{{\sigma ^2}}}{{{P_J}}} + {{\left\| {{{\bf{h}}_{J{W_m}}}\left( n \right)} \right\|}^2}} \right)}}$
and 
$\lambda  = I\ln \left( {\frac{{\det \left( {{{\bf{K}}_1}\left( n \right)} \right)}}{{\det \left( {{{\bf{K}}_0}\left( n \right)} \right)}}} \right)$ is the threshold of the detector.

Based on \cite{ShahzadK2019TVT}, under ${\mathcal{H}_j}$, the distributions of 
${\mathbf{h}}_{J{W_m}}^H{\left( n \right)}{{\mathbf{Y}}_{{W_m}}}{\left( n \right)}$ 
and ${\mathbf{g}}_m^H{\left( n \right)}{{\mathbf{Y}}_{{W_m}}}{\left( n \right)}$ 
are given by 
$CN\left( {{\mathbf{0}},\det \left( {{{\mathbf{K}}_j}{\left( n \right)}} \right){{\left\| {{{\mathbf{h}}_{J{W_m}}}{\left( n \right)}} \right\|}^2}{\mathbf{I}}} \right)$ 
and 
$CN\left( {{\mathbf{0}},\det \left( {{{\mathbf{K}}_j}{\left( n \right)}} \right){{\left\| {{{\mathbf{g}}_m}{\left( n \right)}} \right\|}^2}{\mathbf{I}}} \right)$, 
respectively. 
As a result, we have 
\begin{equation}
	{{\Omega _m}\left( n \right)} \sim {\kappa _j}\chi,
	\label{31}
\end{equation}
where 
${\kappa _j} = \frac{{\det \left( {{{\bf{K}}_j}\left( n \right)} \right){{\left\| {{{\bf{g}}_m}\left( n \right)} \right\|}^2}}}{{{\sigma ^2}\left( {{\sigma ^2} + {{\left\| {{{\bf{g}}_m}\left( n \right)} \right\|}^2}} \right)}} - \frac{{\det \left( {{{\bf{K}}_j}\left( n \right)} \right){{\left\| {{{\bf{h}}_{J{W_m}}}\left( n \right)} \right\|}^2}}}{{{\sigma ^2}\left( {\frac{{{\sigma ^2}}}{{{P_J}}} + {{\left\| {{{\bf{h}}_{J{W_m}}}\left( n \right)} \right\|}^2}} \right)}}$.

Then, ${P_{{\rm{FA}}_m}}{\left( n \right)}$ and ${P_{{\rm{MD}}_m}}{\left( n \right)}$ are obtained as
\begin{equation}
	\begin{aligned}
		{P_{{\rm{FA}}_m}}{\left( n \right)}&= \Pr \left( {{\Omega _m}{\left( n \right)} \ge \lambda |{\mathcal{H}_0}} \right)\\
		&= 1 - \frac{{\Upsilon \left( {I,\frac{\lambda }{{{\kappa _0}}}} \right)}}{{\Gamma \left( I \right)}}, 
		\label{33}
	\end{aligned}
\end{equation}
and
\begin{equation}
	\begin{aligned}
		{P_{{\rm{MD}}_m}}{\left( n \right)}&= \Pr \left( {{\Omega _m}{\left( n \right)} \le \lambda |{\mathcal{H}_1}} \right)\\
		&= \frac{{\Upsilon \left( {I,\frac{\lambda }{{{\kappa _1}}}} \right)}}{{\Gamma \left( I \right)}}, 
		\label{34}
	\end{aligned}
\end{equation}
respectively.
Thus, the DEP of $W_m$ is obtained as
\begin{equation}
	{\xi _m}\left( n \right) = 1 + \frac{1}{{\Gamma \left( I \right)}}\left( {\Upsilon \left( {I,\frac{\lambda }{{{\kappa _1}}}} \right) - \Upsilon \left( {I,\frac{\lambda }{{{\kappa _0}}}} \right)} \right).
	\label{DEP2}
\end{equation}

By observation (\textrm{\ref{DEP2}}), we find it is difficult to analysis the DEP of $W_m$ by the method utilized in \textbf{Lemma 1}. 
By utilizing the Pinsker’s inequality, the lower bound of the DEP of $W_m$ is expressed as \cite{Lehmann2006Book}
\begin{equation}
	{\xi _m}\left( n \right) \ge 1 - \sqrt {\frac{1}{2}\mathbb{D}\left( {P_0^I||P_1^I} \right)}, 
	\label{Pinsker}
\end{equation}
where ${P_j^I}$ denotes the probability distribution function of the signal vectors including $I$ symbols received at $W_m$ under hypotheses ${{\mathcal{H}_j}}$
and 
${\mathbb{D}\left( {P_0^I||P_1^I} \right)}$ denotes the Kullback-Leibler (KL) divergence from ${P_0^I}$ to ${P_1^I}$ \cite{ShahzadK2019TVT}. Then the covertness constraint in this case is rewritten as
\begin{equation}
	\mathbb{D}\left( {P_0^I||P_1^I} \right) \le 2{\varepsilon ^2}, 
	\label{43}
\end{equation}

The following Lemma provides the analytical expression for the KL-divergence.

\begin{lemma}
	The KL-divergence is expressed as 
	\begin{equation}
		\mathbb{D}\left( {P_0^I||P_1^I} \right) = I\left( {\ln \left( {1 + {\gamma _{m,2}}\left( n \right)} \right) - \frac{{{\gamma _{m,2}}\left( n \right)}}{{1 + {\gamma _{m,2}}\left( n \right)}}} \right),
		\label{KLdiver}
	\end{equation}
	where 
	${\gamma _{m,2}}\left( n \right) = \frac{{{P_S}\left( n \right){{\left\| {{{\bf{h}}_{S{W_m}}}\left( n \right)} \right\|}^2}}}{{{\sigma ^2} + {P_J}{{\left\| {{{\bf{h}}_{J{W_m}}}\left( n \right)} \right\|}^2}}}$.
\end{lemma}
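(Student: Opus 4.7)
My plan is to reduce the KL-divergence to the standard formula for two zero-mean complex Gaussians and then exploit the rank-one structure that is already highlighted in the paper.

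First, I would note that under either hypothesis the $I$ columns of ${\mathbf{Y}}_{W_m}(n)$ are i.i.d., so the chain rule for relative entropy gives $\mathbb{D}(P_0^I\|P_1^I)=I\,\mathbb{D}(P_0\|P_1)$, where $P_j=CN(\mathbf{0},\mathbf{K}_j(n))$. It therefore suffices to establish
\begin{equation*}
\mathbb{D}(P_0\|P_1)=\operatorname{Tr}\!\left(\mathbf{K}_1^{-1}(n)\mathbf{K}_0(n)\right)-K+\ln\frac{\det(\mathbf{K}_1(n))}{\det(\mathbf{K}_0(n))}
\end{equation*}
(the standard closed form for zero-mean circularly symmetric complex Gaussians, obtainable in one line by taking $\mathbb{E}_{P_0}[\ln f_0/f_1]$ using (\ref{pdfYWm})) and then plug in the two covariances.

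Next I would handle the log-determinant term. The determinants written just after (\ref{pdfYWm}) already give
\begin{equation*}
\ln\frac{\det(\mathbf{K}_1(n))}{\det(\mathbf{K}_0(n))}=\ln\!\left(1+\frac{P_S(n)\|\mathbf{h}_{SW_m}(n)\|^2}{\sigma^2+P_J\|\mathbf{h}_{JW_m}(n)\|^2}\right)=\ln(1+\gamma_{m,2}(n)),
\end{equation*}
so this piece is immediate.

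The real work is the trace term, and this is the step I expect to be the main obstacle because $\mathbf{K}_0$ and $\mathbf{K}_1$ contain two different rank-one updates of $\sigma^2\mathbf{I}$. The key observation is that, by the definitions in (\ref{5b})--(\ref{7b}), both $\mathbf{h}_{JW_m}(n)$ and $\mathbf{h}_{SW_m}(n)$ are proportional to the all-ones vector, and hence $\mathbf{h}_{JW_m}\mathbf{h}_{JW_m}^H$ and $\mathbf{g}_m\mathbf{g}_m^H$ share the same rank-one direction $\mathbf{e}\mathbf{e}^H$ with $\mathbf{e}=\mathbf{1}/\sqrt{K}$. Writing $a=P_J\|\mathbf{h}_{JW_m}(n)\|^2$ and $b=P_S(n)\|\mathbf{h}_{SW_m}(n)\|^2$, I would insert the Woodbury expression $\mathbf{K}_1^{-1}(n)=\sigma^{-2}\bigl(\mathbf{I}-\mathbf{g}_m\mathbf{g}_m^H/(\sigma^2+\|\mathbf{g}_m\|^2)\bigr)$ already recorded in the paper, multiply by $\mathbf{K}_0(n)=\sigma^2\mathbf{I}+a\,\mathbf{e}\mathbf{e}^H$, and use $(\mathbf{e}\mathbf{e}^H)^2=\mathbf{e}\mathbf{e}^H$, $\operatorname{Tr}(\mathbf{e}\mathbf{e}^H)=1$. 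A short cancellation then yields
\begin{equation*}
\operatorname{Tr}(\mathbf{K}_1^{-1}\mathbf{K}_0)=K-\frac{b}{\sigma^2+a+b}=K-\frac{\gamma_{m,2}(n)}{1+\gamma_{m,2}(n)}.
\end{equation*}

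Finally, combining the trace and log-determinant pieces gives $\mathbb{D}(P_0\|P_1)=\ln(1+\gamma_{m,2}(n))-\gamma_{m,2}(n)/(1+\gamma_{m,2}(n))$, and multiplying by $I$ delivers (\ref{KLdiver}). The only delicate bookkeeping is in the trace simplification; the rest is packaging the standard complex-Gaussian KL formula together with the identities for $\det(\mathbf{K}_j(n))$ that are already in the paper.
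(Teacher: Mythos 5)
Your proposal is correct and follows essentially the same route as the paper's Appendix A: both reduce $\mathbb{D}\left( {P_0^I||P_1^I} \right)$ to $I$ times the single-symbol Gaussian KL divergence and evaluate it as a log-determinant term plus the constant $-K$ plus a trace term (the paper's $D_1$, $D_2$, $D_3$). The only difference is one of execution: the paper re-derives the Gaussian KL formula by direct integration and computes the trace term via simultaneous diagonalization of ${{\mathbf{D}}_0}\left( n \right)$ and ${{\mathbf{D}}_1}\left( n \right)$ (yielding $\sum\nolimits_j {{\tau _j}/{\kappa _j}}$), whereas you invoke the standard closed form and evaluate $\operatorname{Tr}\left( {{\mathbf{K}}_1^{ - 1}\left( n \right){{\mathbf{K}}_0}\left( n \right)} \right)$ with the Woodbury identity; both hinge on the same structural fact that ${{\mathbf{h}}_{S{W_m}}}\left( n \right)$ and ${{\mathbf{h}}_{J{W_m}}}\left( n \right)$ are collinear with the all-ones vector, and your trace algebra checks out.
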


\begin{proof}
	See Appendix \ref{appendicesA}.
\end{proof}

It is easy to find the KL-divergence given in (\textrm{\ref{KLdiver}}) is monotonically increasing respect to ${\gamma _{m,2}}\left( n \right) $. 
Then, by bisection search method the covertness constraint can be equivalently transformed into
\begin{equation}
	{\gamma _{m,2}}\left( n \right)  \le \gamma _{\max ,2}, 
	\label{snr2Wm}
\end{equation}
where $\gamma _{\max ,2} $ is the maximum value of ${\gamma _{m,2}}\left( n \right) $, which results in the maximum the KL-divergence and the minimum DEP.

At this point, a problem similar to $\mathcal{P}_{1.1}$ can be obtained, namely
\begin{subequations}
	\begin{align}
		\mathcal{P}_{1.3}: &\mathop {\max }\limits_{{{\mathbf{P}}_S},{{\mathbf{Q}}_S},{{\mathbf{Q}}_J},\eta } \;\eta \label{P1.3a}\\
		{\mathrm{s.t.}}\; &\eta \le \frac{1}{N}\sum\limits_{n = 1}^N {{R_k}{\left( n \right)}} ,\forall k, \label{P1.3b}\\
		&{\gamma _{m,2}}\left( n \right)  \le \gamma _{\max ,2},\forall n,\forall m, \label{P1.3c}\\
		&(\textrm{\ref{P1.1d}}). \label{P1.3d}
	\end{align}
\end{subequations}

Due to ${\left\| {{{\mathbf{h}}_{J{W_m}}}{\left( n \right)}} \right\|^2} = K{\left| {{h_{J{W_m}}}{\left( n \right)}} \right|^2}$ and ${\left\| {{{\mathbf{h}}_{S{W_m}}}{\left( n \right)}} \right\|^2} = K{\left| {{h_{S{W_m}}}{\left( n \right)}} \right|^2}$, with the same steps in ({\ref{P1.1b2}}) to ({\ref{eq:eq:1c3d4}}), 
$\mathcal{P}_{1.3}$ in the scenarios with multiple-antenna wardens is approximated as
\begin{subequations}
	\begin{align}
		\mathcal{P}_{1.4}: &\mathop {\max }\limits_{{{\mathbf{P}}_S},{{\mathbf{Q}}_S},{{\mathbf{Q}}_J},{\mathbf{b}},{\mathbf{c}},{\mathbf{v}},{\mathbf{d}},{\mathbf{\Theta  }},\eta } \;\eta \\
		{\mathrm{s.t.}}\; &\ln \left( K \right) + A\left( n \right) \le \ln \left( {{\gamma _{\max ,2}}{b_m}\left( n \right)} \right) \nonumber \\
		&\;\;\;\;\;\;\;  + \ln \left( {{v_m}\left( n \right)} \right),\forall n,\forall m, \\
		&{b_m}\left( n \right) \le K{\rho _0}{P_J}{B_m}\left( n \right) + {\sigma ^2},\forall n,\forall m, \\
		&(\textrm{\ref{P1.1d}}), 
		 (\textrm{\ref{P1.1b3}}), 
		 (\textrm{\ref{P1.1b4}}), 
		 (\textrm{\ref{eq:1c3c2}}), 
		 (\textrm{\ref{eq:1c3d2}}), 
		 (\textrm{\ref{eq:eq:1c3d4}}).
		 \label{P1.3}
	\end{align}
\end{subequations}

One can find, when all the wardens equipped with multiple antennas, $\mathcal{P}_{1}$ also can be solved by algorithm similar to the Algorithm 1.

\begin{table}[t]
		\caption{\emph{List of Simulation Parameters.}}
		\begin{center}
			\begin{tabular}{|c|c|c|c|}
				\hline
				\textbf{Notation}   						& \textbf{Value} &\textbf{Notation}   						& \textbf{Value}\\
				\hline
				${{\mathbf{q}}_S^I}$             		& ${\left[ { -100,100} \right]^T}$ &${{\mathbf{q}}_S^F}$               		& ${\left[ { 700,100} \right]^T}$\\
				\hline
				${{{\mathbf{q}}_{{U_k}}}}$              & ${\left[ { 100,200; 300,300; 500,200} \right]^T}$&${{{{\mathbf{\hat q}}}_{{W_m}}}}$     	& \makecell[c]{${\left[ { 100,0; 300,100; 500,0} \right]^T}$\\${\left[ { 100,100; 300,0; 500,100} \right]^T}$}\\
				\hline
				$r_m$                 			& $15$ m, $30$ m, $15$ m&$H_S$  							& $100$ m\\
				\hline
				$H_J$               			& $70$ m&${\sigma ^2}$			& ${{ -120}}$ dBm\\
				\hline
				${P_{\max }}$			& $0.2$ W&${P_{J}}$					& $0.1$ W\\
				\hline
				${V_{S,\max }}$			& $20$ m/s &${V_{J,\max }}$			& $10$ m/s\\
				\hline
				${\rho _0}$				& ${{ -30}}$ dB& $T _0$						 & $100$ s\\
				\hline
				$\varepsilon $						& 0.05& ${\delta _t}$						& $2$ s\\
				\hline
				$I$										& 30 & $\hat \xi$						& $0.001$\\
				\hline
			\end{tabular}
		\end{center}
		\label{table1}
\end{table}

\section{Simulation Results and Discussion}
\label{sec:Simulation}

In this section, simulation results are presented to verify the performance of the proposed algorithm. 
Unless otherwise stated, the details of the parameter setup are listed in TABLE \ref{table1}. 
For comparison, the following benchmark schemes are also considered: 

\begin{enumerate}
	
	\item Benchmark 1: $S$ flies with constant trajectory and the transmit power of $S$ and the trajectory of $J$ are optimized. 
	
	\item Benchmark 2 \footnote{
		Benchmark 2 is analogous to the scheme proposed in \cite{JiangX2021TVT}, in which the power and trajectory of the aerial base station and user scheduling were jointly optimized to improve the average covert rate of an aerial system. 
		Nevertheless, the airborne jammer was not applied to the communication system in \cite{JiangX2021TVT}. 
		For the purpose of comparison in the same case, we suppose that $J$ flies with a fixed trajectory.
	}: The trajectory of $J$ is fixed and the transmit power and trajectory of $S$ are optimized. 
	
	\item Benchmark 3 \footnote{
		Benchmark 3 is analogous to the optimization scheme proposed in \cite{ChenX2021TVT}, in which a multi-antenna terrestrial jammer was utilized to improve the covert performance of the aerial communication systems and The location of the jammer was optimized. 
		To compare with the existing scheme under the same scenario, we assume $J$ hovering in a fixed position and the hovering position of $J$ was optimized. 
	}: $J$ hovers at a fixed position and the hovering position of $J$, the trajectory and transmit power of $S$ are jointly optimized.
	
\end{enumerate}

\begin{figure}[t]
	\centering
	\subfigure[The optimal trajectory.]{
		\label{fig02a}
		\includegraphics[width = 0.4  \textwidth]{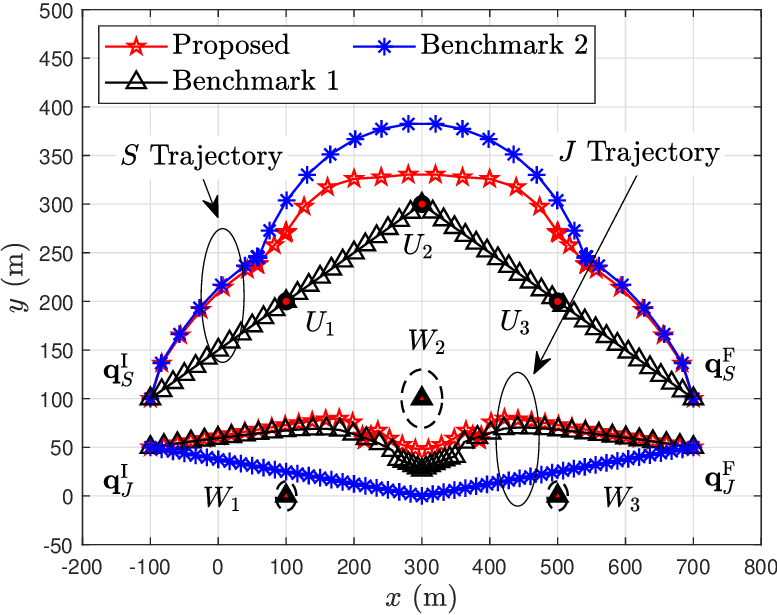}}
	\subfigure[The transmit power of $S$.]{
		\label{fig02b}
		\includegraphics[width = 0.4  \textwidth]{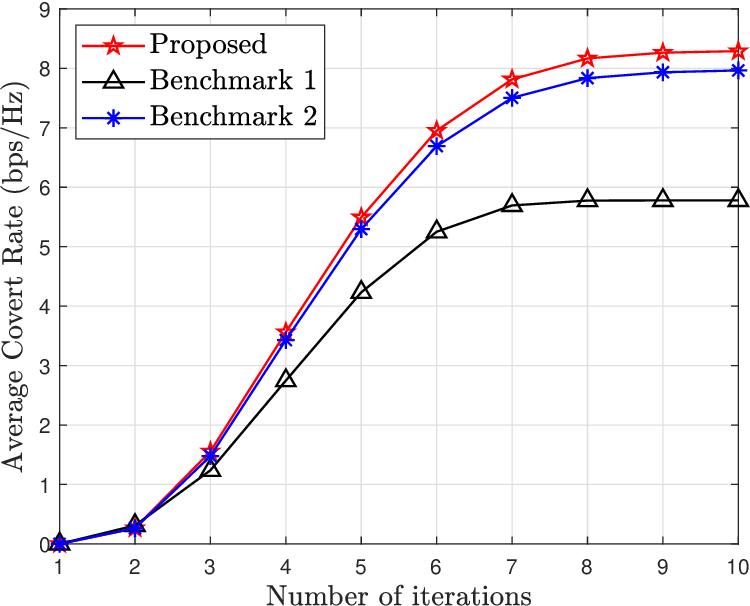}}
	\subfigure[The achievable covert rate.]{
		\label{fig02c}
		\includegraphics[width = 0.4  \textwidth]{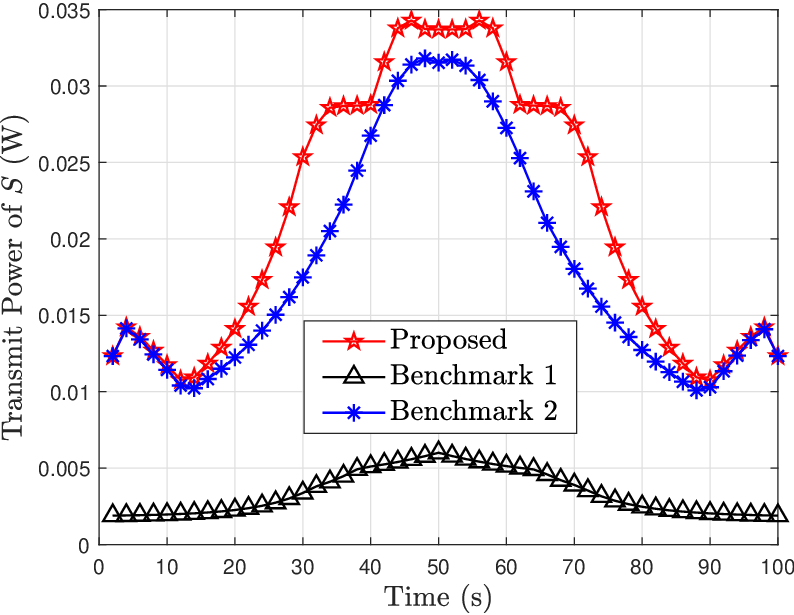}}
	\subfigure[The relationship between the average secrecy rate and the
	number of iterations.]{
		\label{fig02d}
		\includegraphics[width = 0.4  \textwidth]{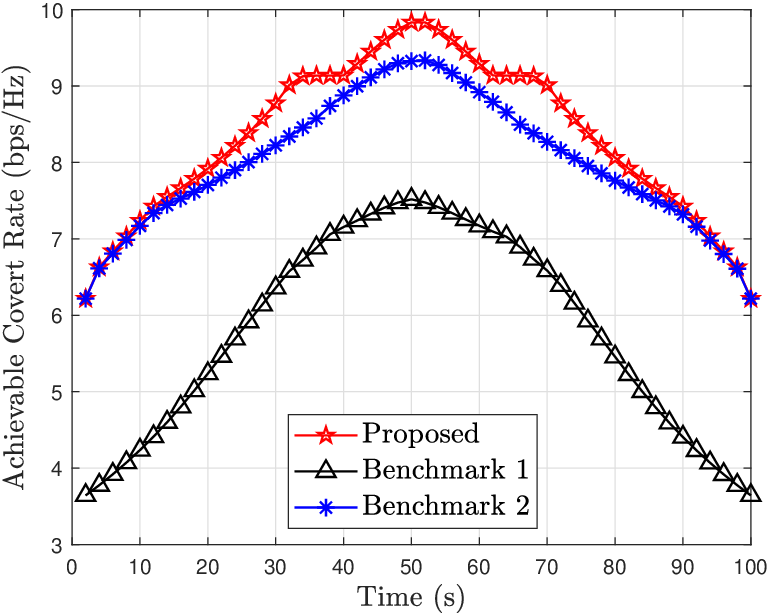}}
	\caption{Scenario 1: The covert systems with $N_W = 3$ single-antenna wardens.}
	\label{fig02s}
\end{figure}
\begin{figure}[t]
	\centering
	\subfigure[The optimal trajectory.]{
		\label{fig03a}
		\includegraphics[width = 0.4  \textwidth]{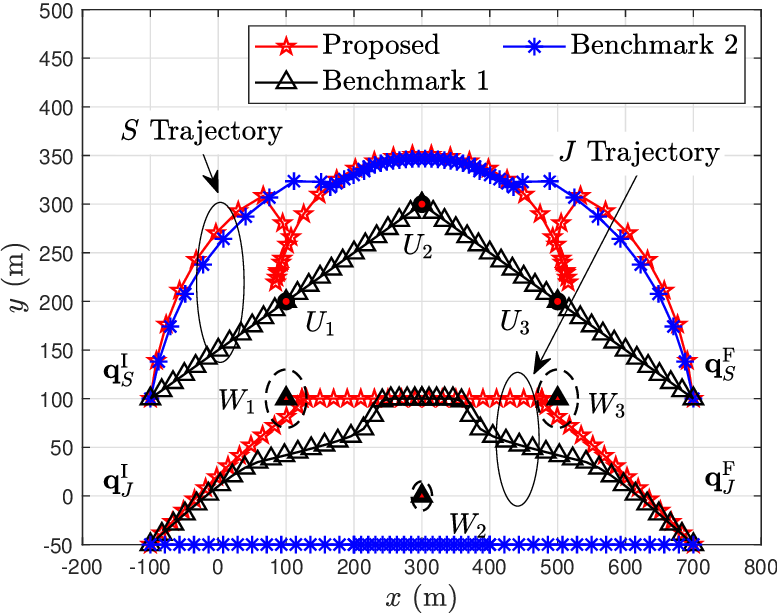}}
	\subfigure[The transmit power of $S$.]{
		\label{fig03b}
		\includegraphics[width = 0.4  \textwidth]{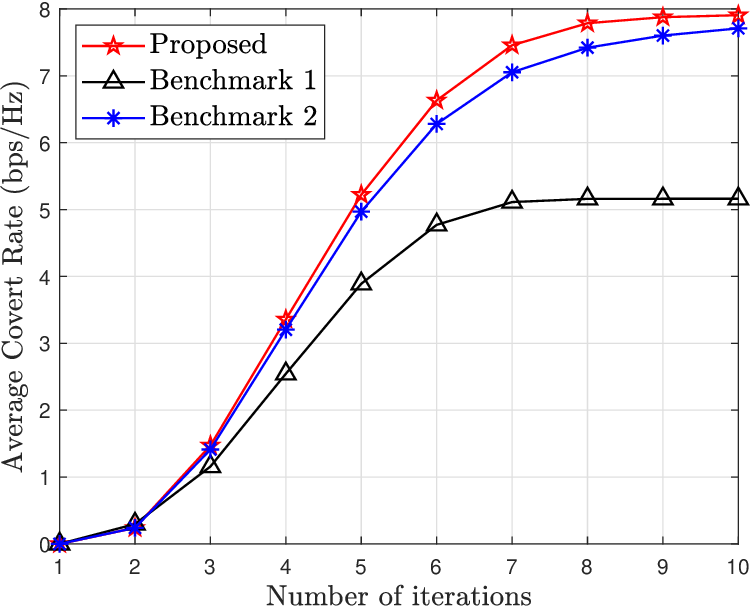}}
	\subfigure[The achievable covert rate.]{
		\label{fig03c}
		\includegraphics[width = 0.4  \textwidth]{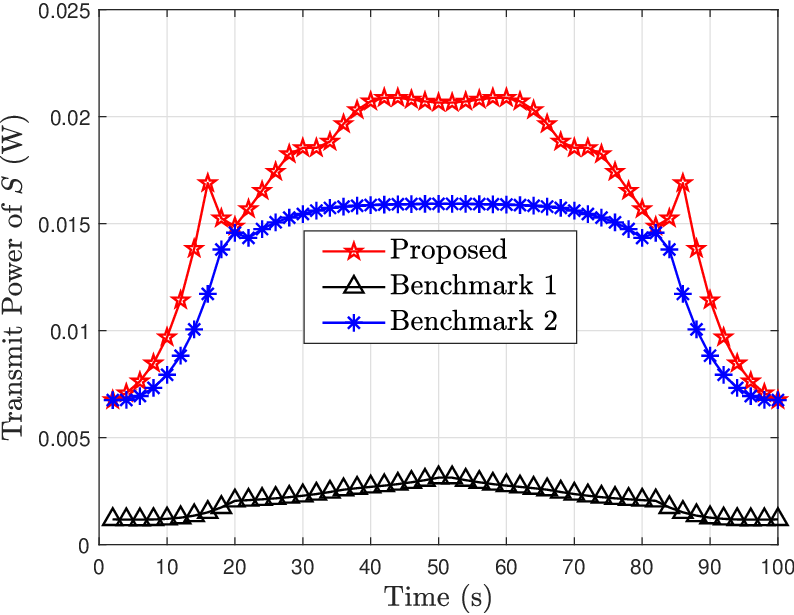}}
	\subfigure[The relationship between the average secrecy rate and the
	number of iterations.]{
		\label{fig03d}
		\includegraphics[width = 0.4  \textwidth]{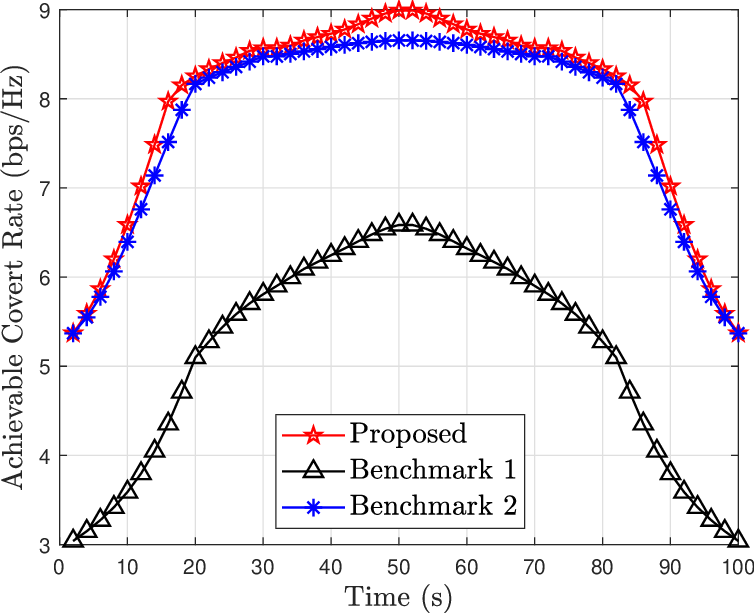}}
	\caption{Scenario 2: The covert systems with $N_W = 3$ single-antenna wardens.}
	\label{fig03s}
\end{figure}
Figs. \ref{fig02s}  and \ref{fig03s} illustrate the optimized trajectory, transmit power and achievable covert rate for the different schemes, respectively. 
For Benchmark 1 where $S$ flies with a given trajectory, the transmit power should be reduced as it approaches the wardens' area to avoid the information being detected. 
While for Benchmark 2 where $J$ flies with a given trajectory, it can be found that when $J$'s trajectory deviates from the optimal trajectory and moves away from the warden that should be close, $S$'s trajectory will also move away from the warden accordingly to prevent the information being detected. 
In the proposed scheme, $S$ is as close as possible to legitimate users and increases the transmit power while satisfying the quality of covert communication to maximize the achievable covert rate. 
At the same time, to make $S$ can be as close as possible to the legitimate user and increase the transmit power, $J$ will get as close as possible to the most threatening warden under the current time slot. 
In the proposed scheme, the characteristics of benchmark 1 and benchmark 2 are considered simultaneously to obtain optimal covert performance.


\begin{figure}[t]
	\centering
	\subfigure[The optimal trajectory.]{
		\label{fig04a}
		\includegraphics[width = 0.4  \textwidth]{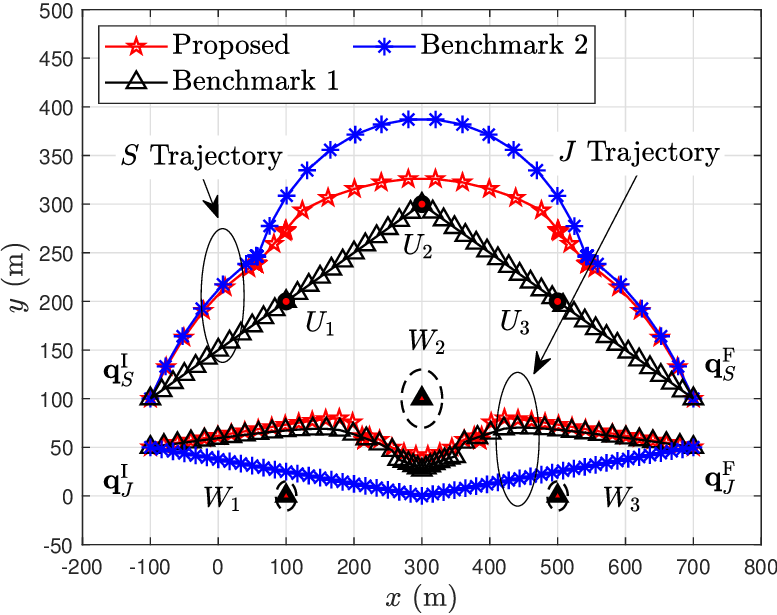}}
	\subfigure[The transmit power of $S$.]{
		\label{fig04b}
		\includegraphics[width = 0.4  \textwidth]{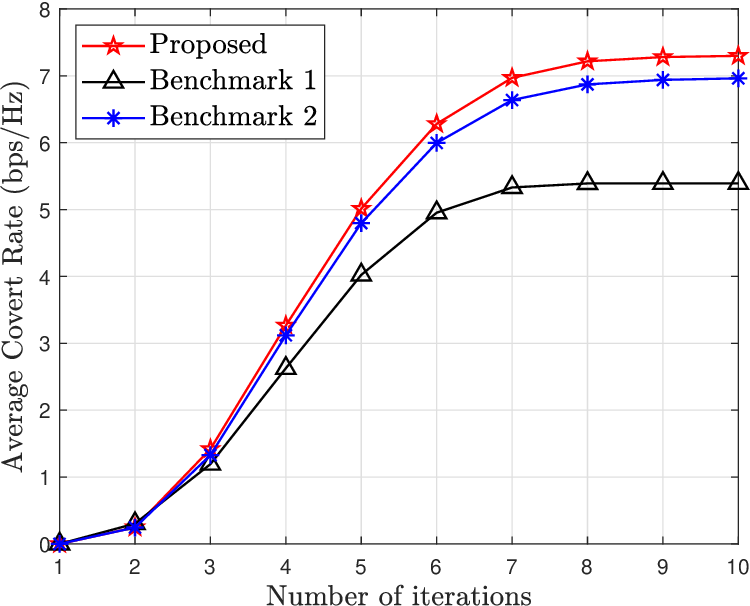}}
	\subfigure[The achievable covert rate.]{
		\label{fig04c}
		\includegraphics[width = 0.4  \textwidth]{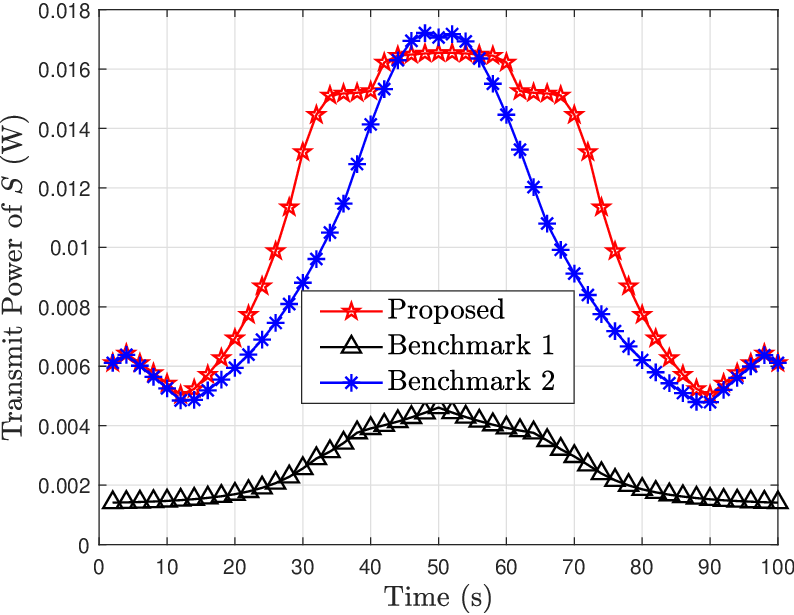}}
	\subfigure[The relationship between the average secrecy rate and the
	number of iterations.]{
		\label{fig04d}
		\includegraphics[width = 0.4  \textwidth]{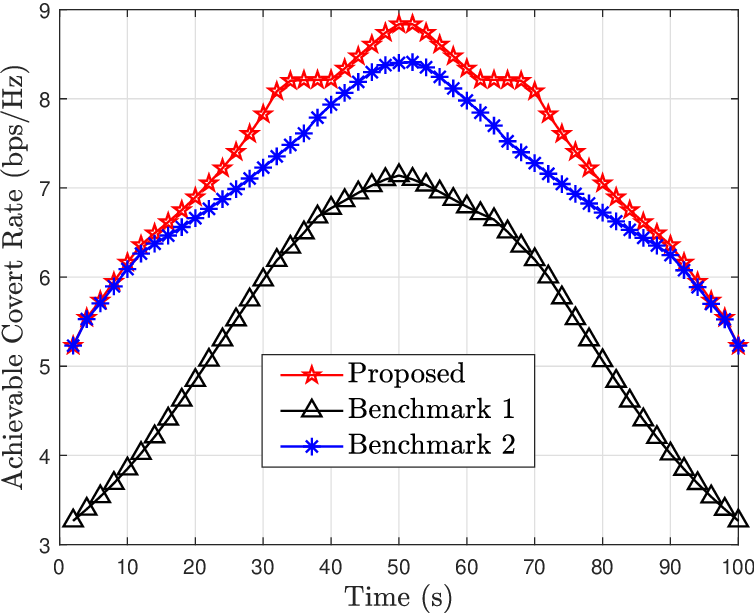}}
	\caption{Scenario 1: The covert systems with $N_W = 3$ wardens with $K =6$ antennas.}
	\label{fig04m}
\end{figure}
\begin{figure}[t]
	\centering
	\subfigure[The optimal trajectory.]{
		\label{fig05a}
		\includegraphics[width = 0.4  \textwidth]{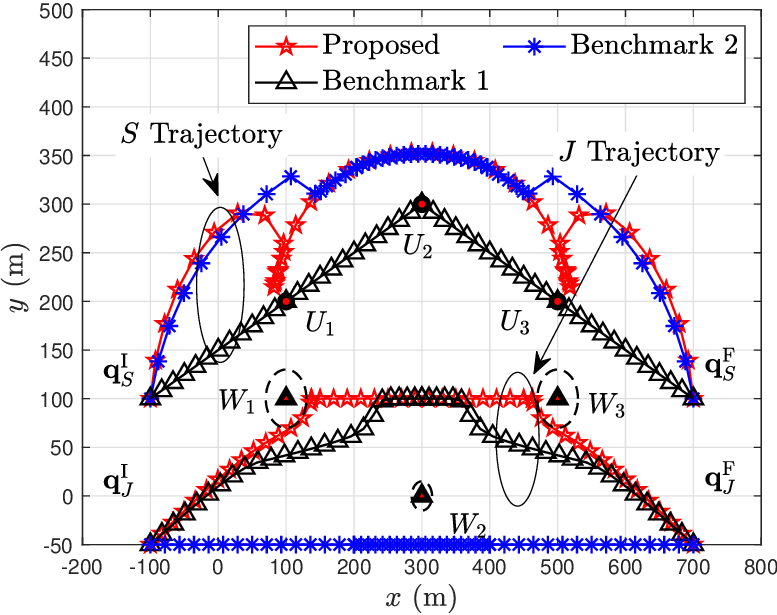}}
	\subfigure[The transmit power of $S$.]{
		\label{fig05b}
		\includegraphics[width = 0.4  \textwidth]{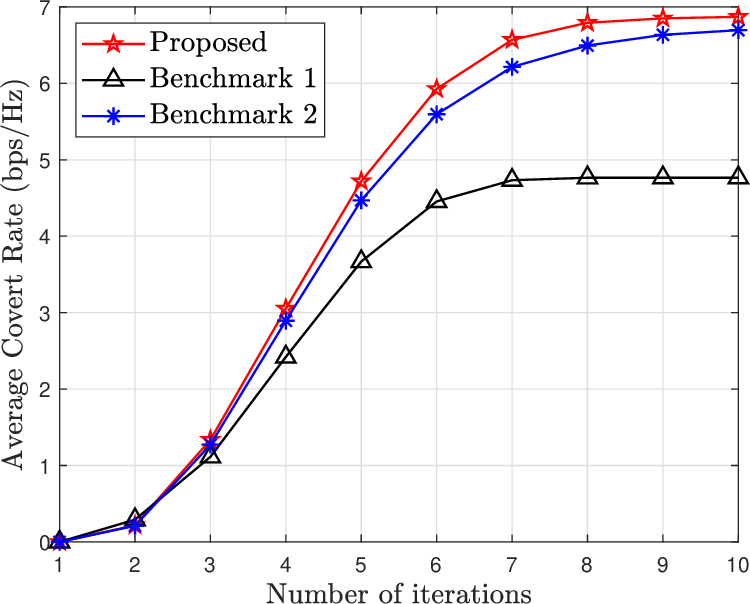}}
	\subfigure[The achievable covert rate.]{
		\label{fig05c}
		\includegraphics[width = 0.4  \textwidth]{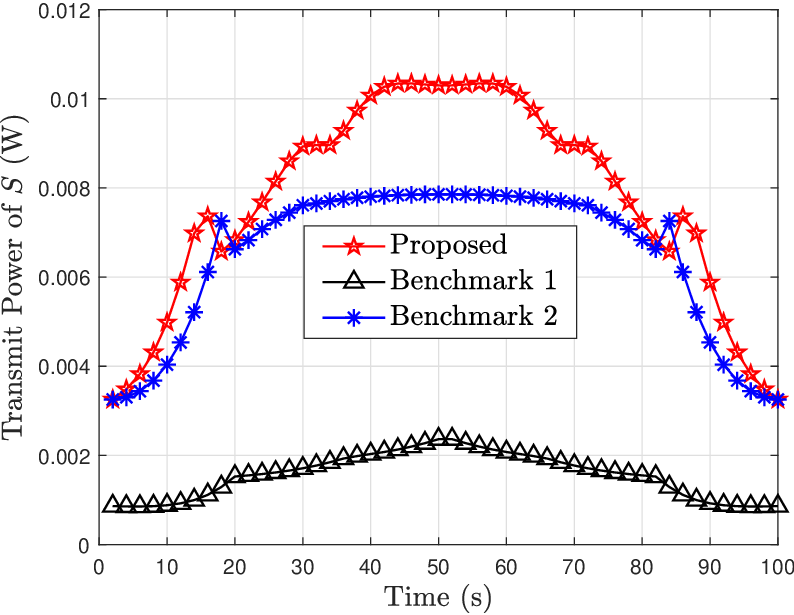}}
	\subfigure[The relationship between the average secrecy rate and the
	number of iterations.]{
		\label{fig05d}
		\includegraphics[width = 0.4  \textwidth]{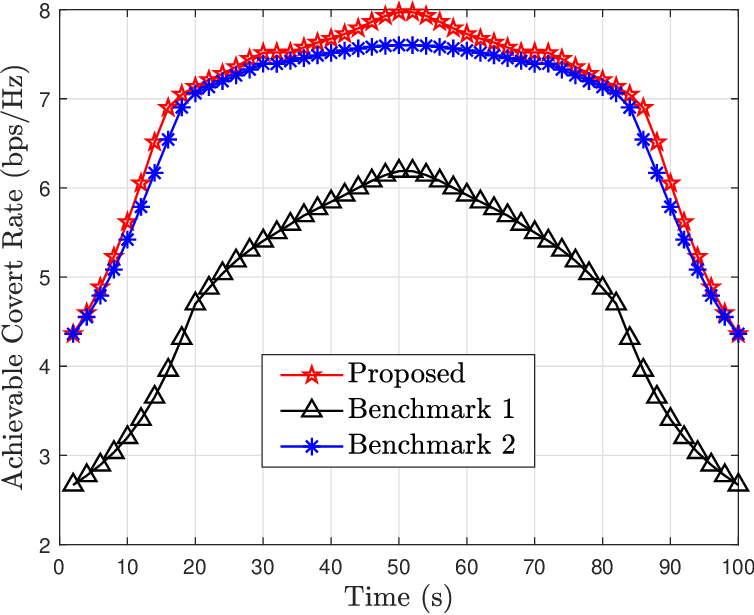}}
	\caption{Scenario 2: The covert systems with $N_W = 3$ wardens with $K =6$ antennas.}
	\label{fig05m}
\end{figure}
Figs. \ref{fig04m} and \ref{fig05m} illustrate the optimized trajectory, transmit power and achievable covert rate for the different schemes, respectively. 
For Benchmark 1 where $S$ flies with a given trajectory, the transmit power should be reduced as it approaches the wardens' area to avoid the information being detected. 
While for Benchmark 2 where $J$ flies with a given trajectory, it can be found that when $J$'s trajectory deviates from the optimal trajectory and moves away from the warden that should be close, $S$'s trajectory will also move away from the warden accordingly to prevent the information being detected. 
In the proposed scheme, $S$ is as close as possible to legitimate users and increases the transmit power while satisfying the quality of covert communication to maximize the achievable covert rate. 
At the same time, to make $S$ can be as close as possible to the legitimate user and increase the transmit power, $J$ will get as close as possible to the most threatening warden under the current time slot. 
In the proposed scheme, the characteristics of benchmark 1 and benchmark 2 are considered simultaneously to obtain optimal covert performance.

\begin{figure}[t]
	\centering
	\subfigure[Average covert rate versus varying $I$.]{
		\label{fig06a}
		\includegraphics[width = 0.4  \textwidth]{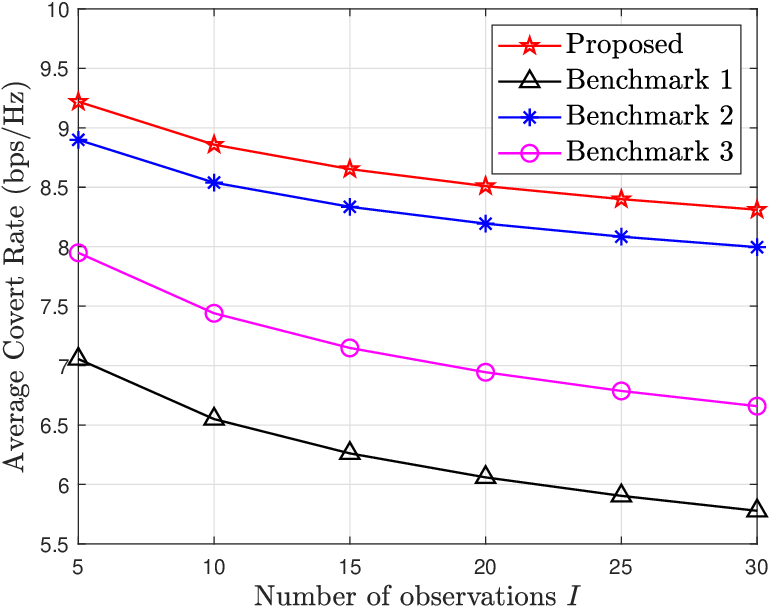}}
	\subfigure[Average transmit power of $S$ versus varying $I$.]{
		\label{fig06b}
		\includegraphics[width = 0.4  \textwidth]{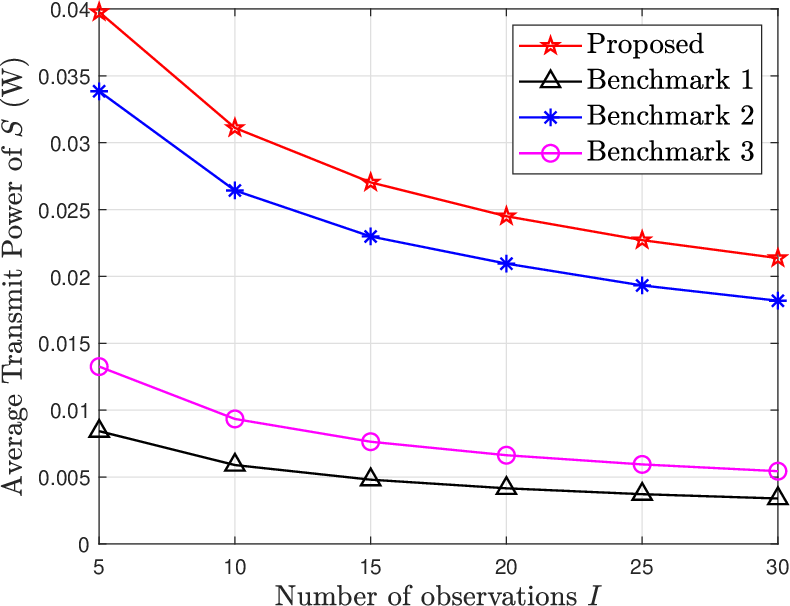}}
	\caption{The impact of varying number of observations $I$.}
	\label{fig06}
\end{figure}
Fig. \ref{fig06} compares the performance between the proposed scheme and other schemes with varying observations number $I$. 
One can see from Fig. \ref{fig06} that the average covert rate achieved by all schemes decreases significantly as $I$ increases. 
The observation number $ I $ considerably impacts the average covert rate of the considered system. 
The reason is that when $I$ increases, wardens' monitor capability becomes more powerful, enabling the $S$ to transmit lower power than before. 
Meanwhile, the average covert rate decreases at a slower rate as $I$ becomes larger can be observed, which reflects when $I$ is large enough, increasing $I$ again will have less gain on the monitor ability of wardens.

\begin{figure}[t]
	\centering
	\subfigure[Average covert rate versus varying $\varepsilon $.]{
		\label{fig07a}
		\includegraphics[width = 0.4  \textwidth]{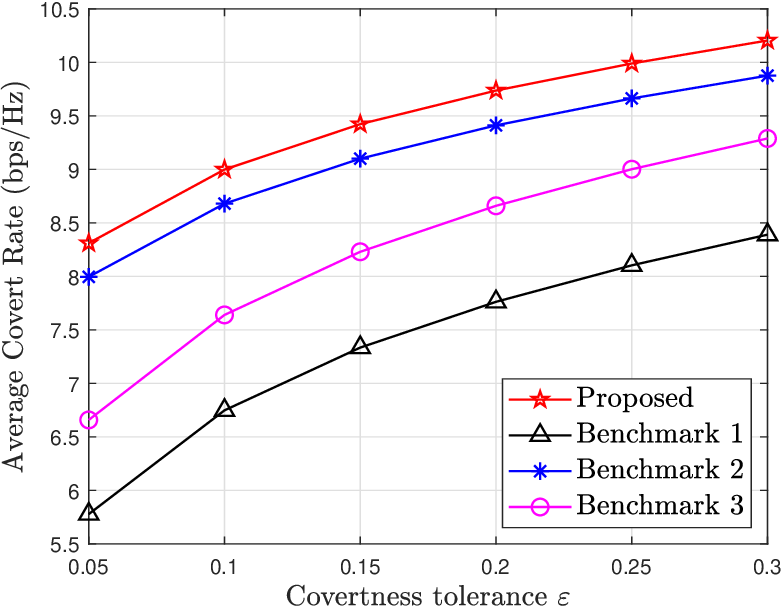}}
	\subfigure[Average transmit power of $S$ versus varying $\varepsilon $.]{
		\label{fig07b}
		\includegraphics[width = 0.4  \textwidth]{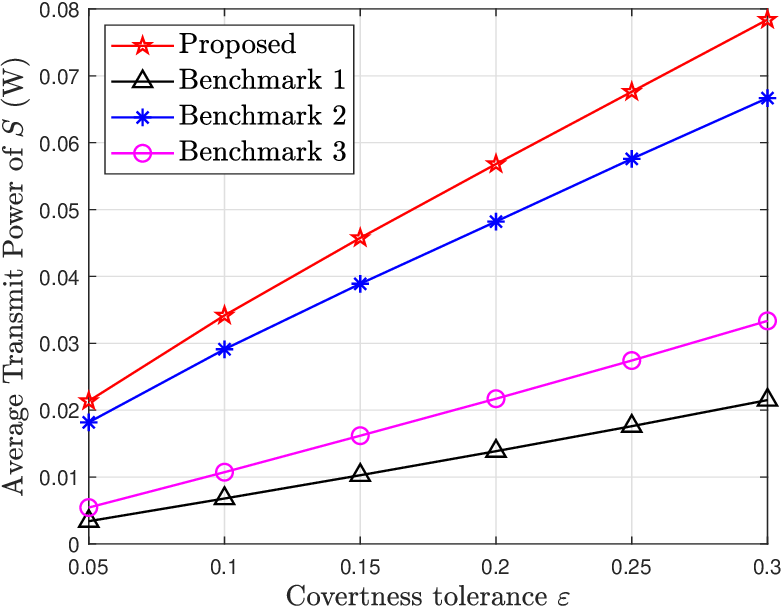}}
	\caption{The impact of varying covertness tolerance $\varepsilon $.}
	\label{fig07}
\end{figure}
Fig. \ref{fig07} plots the average covert rate of the system with covertness tolerance $\varepsilon $ for different schemes. 
One can observe from Fig. \ref{fig07} that the average covert rate achieved by all schemes increases significantly as $\varepsilon $ increases. 
This is because when $\varepsilon $ increases, requirements for covertness become easier to meet, enabling the $S$ to transmit higher power. 
Compared with given $J$'s trajectory, optimizing the hovering position of $J$ can effectively cope with the monitoring of wardens. 
This is because the most threatening warden at different time slots may be different.

\begin{figure}[t]
	\centering
	\subfigure[Average covert rate versus varying transmit power of $J$.]{
		\label{fig08a}
		\includegraphics[width = 0.4  \textwidth]{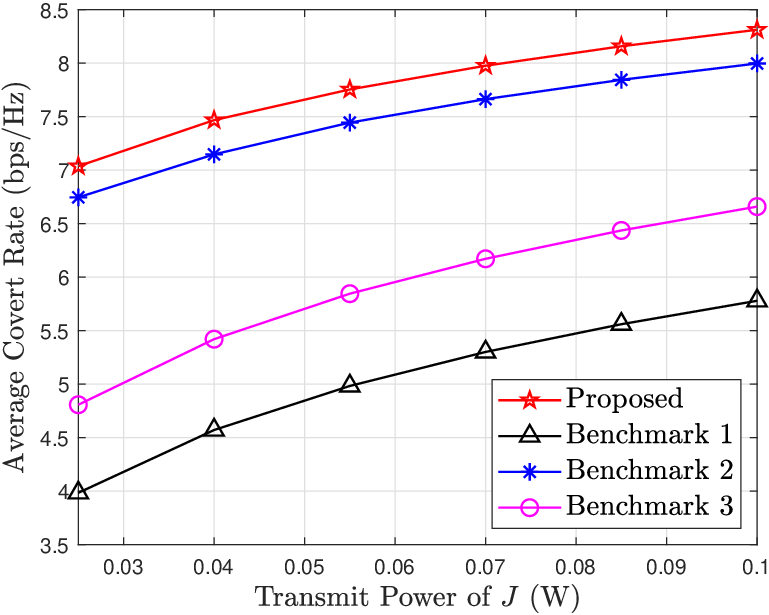}}
	\subfigure[Average transmit power of $S$ versus varying transmit power of $J$.]{
		\label{fig08b}
		\includegraphics[width = 0.4  \textwidth]{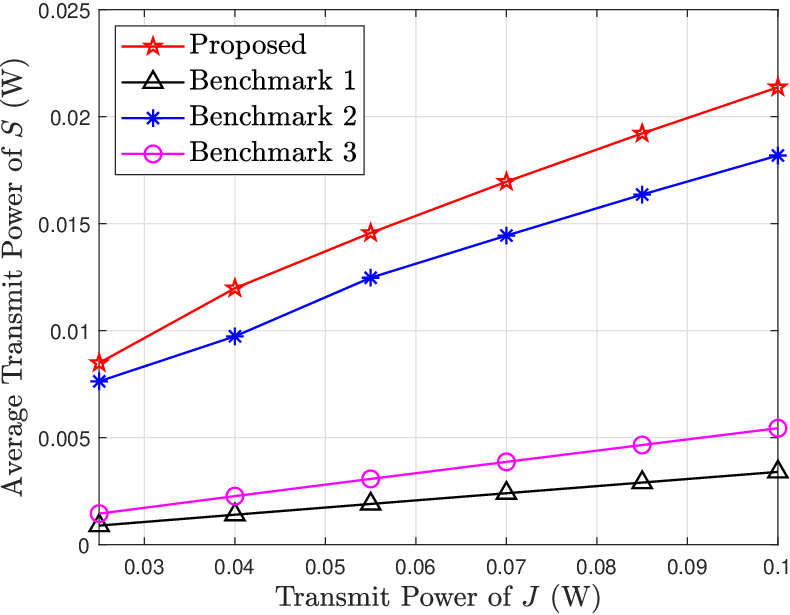}}
	\caption{The impact of varying transmit power of $J$.}
	\label{fig08}
\end{figure}
Fig. \ref{fig08} illustrates the effect of the transmit power of $J$ on the average covert rate with different schemes. 
It can be observed that as the transmit power of $J$ increases, the curves corresponding to all the schemes show an increasing trend. 
This is because the higher power jamming signals can more effectively confuse the detection wardens, making it more difficult for them to detect the presence or absence of covert information from the jamming. 
It is also worth noting that the gain in average covert rate from increasing transmit power of $J$ is diminishing, which shows that simply increasing the transmitting power of $ J $ can not enhance the covert performance always. 

\begin{figure}[t]
	\centering
	\subfigure[Average covert rate versus varying radius of uncertain region of wardens.]{
		\label{fig09a}
		\includegraphics[width = 0.4  \textwidth]{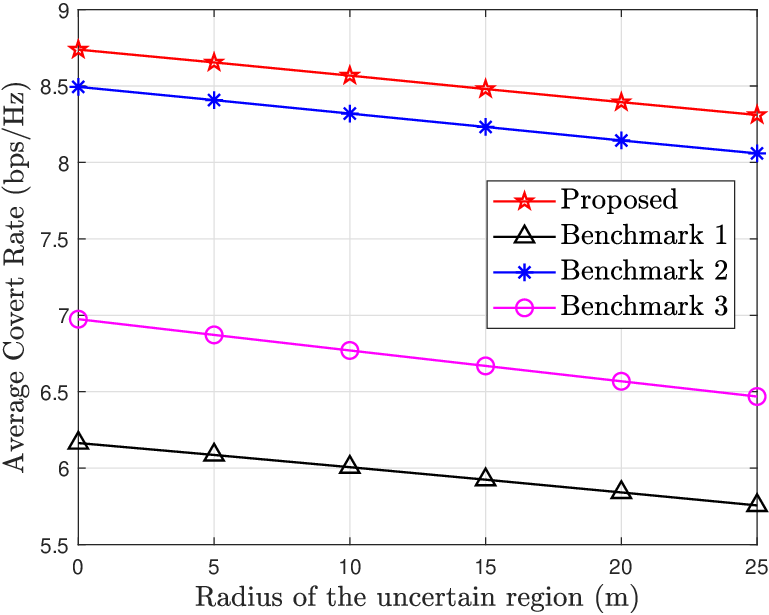}}
	\subfigure[Average transmit power of $S$ versus varying radius of uncertain region of wardens.]{
		\label{fig09b}
		\includegraphics[width = 0.4  \textwidth]{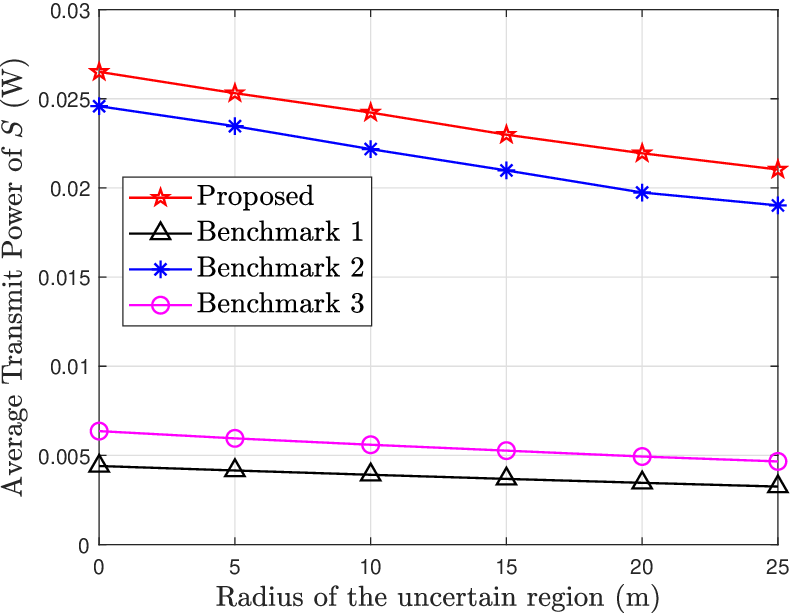}}
	\caption{The impact of varying radius of uncertain region of wardens.}
	\label{fig09}
\end{figure}
Fig. \ref{fig09} shows the relationship between the radius of the uncertainty region and the average covert rate.
One can observe that the average covert rate has a decreasing trend with increasing the uncertainty region. 
The reason is that when the uncertain region radius increases, $S$ obtaining the perfect location knowledge will become more challenging. 
At this point, UAVs had to choose a more conservative transmission method to ensure that the covert information is not detected. 
The uncertainty area radius of wardens, which reflects the channel state information between the UAV and the monitoring node, considerably impacts the average covert rate of the considered system. 
Moreover, we can also see from Fig. \ref{fig06} - \ref{fig09} that the covert performance of the proposed scheme outperforms that of others. 
In contrast, Benchmark 1 wherein $S$ works in the fixed trajectory has the lowest covert performance.


\begin{figure}[t]
	\centering
	\includegraphics[width = 3in]{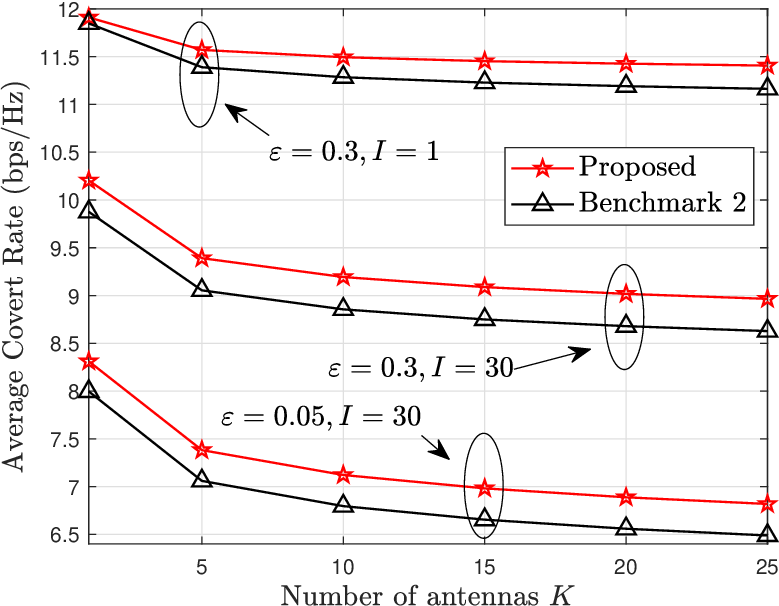}
	\caption{Average covert rate versus number of antennas of wardens.}
	\label{fig10}
\end{figure}
The effect of the number of antennas the wardens are equipped with on the average covert rate is shown in Fig. \ref{fig10}. 
It shows that the average covert rate for all the schemes decreases when the number of antennas equipped by the wardens increases. 
This is because, with the increase in antennas, equipped wardens can receive more information and, to a certain extent, eliminate noise interference. 
In addition, by looking at the curves in Fig. \ref{fig10}, we can also find that the larger the $I$ is, the steeper the downward trend of the curve in the same scheme. 
Similarly, the scheme's smaller $\varepsilon $ corresponds to a steeper downward curve trend. 
This is because for larger $I$, the gain from each increased antenna will also increase, and for $\varepsilon $ is similar to this.

\section{Conclusion}
\label{sec:Conclusion}

This work investigated a dual collaborative UAV system for covert communication in a multi-user scenario on the uncertainty of multiple non-colluding monitoring nodes locations. 
The scenario with multiple single-antenna and multiple-antenna wardens was considered, and the covertness constraint was simplified by analyzing 
the DEP with uncertain locations of wardens.
By jointly designing the trajectory and transmit power of the aerial base station and the trajectory of the aerial jammer, maximizing the minimum average covert rate was formulated as a multivariate coupled non-convex maximization problem. 
A new efficient algorithm was proposed to solve this challenging problem based on the SCA technique, and the KKT solutions were obtained. 
Numerical results show the proposed algorithm's efficiency and each parameter's impact on the average covert rate. 


\begin{appendices}	
	\section{Proof of Lemma 2}
	\label{appendicesA}

By the previous description and the definition of KL-divergence \cite{Lehmann2006Book}, the following equation is obtained
\begin{equation}
	\begin{aligned}
		\mathbb{D}\left( {P_0^I||P_1^I} \right) &= I\mathbb{D}\left( {{P_0}||{P_1}} \right) \\
		&= 2I\int_{{\mathbf{x}} \in {\mathbb{R}^K}} {{f_0}\left( {\mathbf{x}} \right)\ln \left( {\frac{{{f_0}\left( {\mathbf{x}} \right)}}{{{f_1}\left( {\mathbf{x}} \right)}}} \right)d{\mathbf{x}}},
		\label{KLdivergence1}
	\end{aligned}
\end{equation}
where
${f_0}\left( {\bf{x}} \right) = \frac{{\exp \left( { - \frac{1}{2}{{\bf{x}}^T}{\bf{D}}_0^{ - 1}\left( n \right){\bf{x}}} \right)}}{{{{\left( {2\pi } \right)}^{\frac{K}{2}}}\det {{\left( {{{\bf{D}}_0}\left( n \right)} \right)}^{\frac{1}{2}}}}}$, 
${f_1}\left( {\bf{x}} \right) = \frac{{\exp \left( { - \frac{1}{2}{{\bf{x}}^T}{\bf{D}}_1^{ - 1}\left( n \right){\bf{x}}} \right)}}{{{{\left( {2\pi } \right)}^{\frac{K}{2}}}\det {{\left( {{{\bf{D}}_1}\left( n \right)} \right)}^{\frac{1}{2}}}}}$, 
${{\mathbf{D}}_0}\left( n \right) = \frac{{{{\mathbf{K}}_0}\left( n \right)}}{2}$, 
and 
${{\mathbf{D}}_1}\left( n \right) = \frac{{{{\mathbf{K}}_1}\left( n \right)}}{2}$. 
With some simple algebraic manipulations, we obtain 
	\begin{equation}
		\begin{aligned}
			\mathbb{D}\left( {P_0^I||P_1^I} \right) =& \underbrace {\int_{{\bf{x}} \in {\mathbb{R}^K}} {\frac{{\exp \left( { - \frac{1}{2}{{\bf{x}}^T}{\bf{D}}_0^{ - 1}\left( n \right){\bf{x}}} \right)}}{{{{\left( {2\pi } \right)}^{\frac{K}{2}}}\det {{\left( {{{\bf{D}}_0}\left( n \right)} \right)}^{\frac{1}{2}}}}}\ln \left( {\frac{{\det \left( {{{\bf{D}}_1}\left( n \right)} \right)}}{{\det \left( {{{\bf{D}}_0}\left( n \right)} \right)}}} \right)d{\bf{x}}} }_{ \buildrel \Delta \over = {D_1}}\\
			& \underbrace { - \int_{{\bf{x}} \in {\mathbb{R}^K}} {\frac{{{{\bf{x}}^T}{\bf{D}}_0^{ - 1}\left( n \right){\bf{x}}}}{{{{\left( {2\pi } \right)}^{\frac{K}{2}}}\det {{\left( {{{\bf{D}}_0}\left( n \right)} \right)}^{\frac{1}{2}}}}}\exp \left( { - \frac{1}{2}{{\bf{x}}^T}{\bf{D}}_0^{ - 1}\left( n \right){\bf{x}}} \right)d{\bf{x}}} }_{ \buildrel \Delta \over = {D_2}}\\
			&+ \underbrace {\int_{{\bf{x}} \in {\mathbb{R}^K}} {\frac{{{{\bf{x}}^T}{\bf{D}}_1^{ - 1}\left( n \right){\bf{x}}}}{{{{\left( {2\pi } \right)}^{\frac{K}{2}}}\det {{\left( {{{\bf{D}}_0}\left( n \right)} \right)}^{\frac{1}{2}}}}}\exp \left( { - \frac{1}{2}{{\bf{x}}^T}{\bf{D}}_0^{ - 1}\left( n \right){\bf{x}}} \right)d{\bf{x}}} }_{ \buildrel \Delta \over = {D_3}}.
			\label{KLdivergence2}
		\end{aligned}
	\end{equation}

\setcounter{equation}{62} 
Since ${\mathbf{D}}_0^{ - 1}\left( n \right)$ is a symmetric matrix, then we have 
${\bf{D}}_0^{ - 1}\left( n \right) = {\bf{Q}}_0^T\left( n \right)\theta _0^{ - 1}\left( n \right){{\bf{Q}}_0}\left( n \right)$, 
where 
${{\bf{Q}}_0}\left( n \right)$ is an orthogonal matrix. 
Then we have 
$$\exp \left( { - \frac{1}{2}{{\bf{x}}^T}{\bf{D}}_0^{ - 1}\left( n \right){\bf{x}}} \right) = \exp \left( { - \frac{1}{2}{{\bf{x}}^T}{\bf{Q}}_0^T\left( n \right)\theta _0^{ - 1}\left( n \right){{\bf{Q}}_0}\left( n \right){\bf{x}}} \right).$$
Denote ${\bf{y}} = {{\bf{Q}}_0}\left( n \right){\bf{x}}$, we have 
${\bf{x}} = {\bf{Q}}_0^T\left( n \right){\bf{y}}$ 
and 
$d{\bf{x}} = \det \left( {{\bf{Q}}_0^T\left( n \right)} \right)d{\bf{y}}$, respectively.
Thus, with variable substitution, 
${D_1}$ is written in the following form
\begin{equation}
	\begin{aligned}
		{D_1} =& \int_{{\bf{y}} \in {\mathbb{R}^K}} {\frac{{\exp \left( { - \frac{1}{2}{{\bf{y}}^T}\theta _0^{ - 1}\left( n \right){\bf{y}}} \right)}}{{{{\left( {2\pi } \right)}^{\frac{K}{2}}}\tau _1^{\frac{1}{2}} \cdots \tau _K^{\frac{1}{2}}}}}  \ln \left( {\frac{{\det \left( {{{\bf{D}}_1}\left( n \right)} \right)}}{{\det \left( {{{\bf{D}}_0}\left( n \right)} \right)}}} \right)\det \left( {{\bf{Q}}_0^T\left( n \right)} \right)d{\bf{y}}\\
		= &\ln \left( {\frac{{\det \left( {{{\mathbf{D}}_1}\left( n \right)} \right)}}{{\det \left( {{{\mathbf{D}}_0}\left( n \right)} \right)}}} \right) \times \int_{ - \infty }^{ + \infty } {\frac{1}{{{{\left( {2\pi } \right)}^{\frac{1}{2}}}\tau _1^{\frac{1}{2}}}}\exp \left( { - \frac{1}{2}\tau _1^{ - 1}y_1^2} \right)d{y_1}}  \times \cdots \\
		&\times \int_{ - \infty }^{ + \infty } {\frac{1}{{{{\left( {2\pi } \right)}^{\frac{1}{2}}}\tau _K^{\frac{1}{2}}}}\exp \left( { - \frac{1}{2}\tau _K^{ - 1}y_K^2} \right)d{y_K}}\\
		\mathop  = \limits^{\left( a \right)}  &\ln \left( {\frac{{\det \left( {{{\mathbf{D}}_1}\left( n \right)} \right)}}{{\det \left( {{{\mathbf{D}}_0}\left( n \right)} \right)}}} \right),
		\label{D101}
	\end{aligned}
\end{equation}
where 
${\mathbf{D}}_0^{ - 1}\left( n \right) = {\mathbf{Q}}_0^T\left( n \right)\theta _0^{ - 1}\left( n \right){{\mathbf{Q}}_0}\left( n \right)$, ${\mathbf{y}} = {{\mathbf{Q}}_0}\left( n \right){\mathbf{x}}$, ${{\mathbf{Q}}_0}\left( n \right)$ is unitary matrix, 
${\tau _1}, \cdots, {\tau _K}$ is the eigenvalue of ${{{\mathbf{D}}_0}\left( n \right)}$, 
${\theta _0\left( n \right)}$ is the diagonal matrix composed of these eigenvalues \cite{HornRA2012Book}, 
and 
step ${\left( a \right)}$ is obtained 
due to $\int_{ - \infty }^{ + \infty } {\frac{1}{{{{\left( {2\pi } \right)}^{\frac{1}{2}}}\tau _i^{\frac{1}{2}}}}\exp \left( { - \frac{1}{2}\tau _i^{ - 1}y_i^2} \right)d{y_i}} = 1$.
Based on ${{\mathbf{D}}_0}\left( n \right) = \frac{{{{\mathbf{K}}_0}\left( n \right)}}{2}$ and ${{\mathbf{D}}_1}\left( n \right) = \frac{{{{\mathbf{K}}_1}\left( n \right)}}{2}$, we obtain 
	\begin{subequations}
	\begin{align}
		&\det \left( {{{\mathbf{D}}_0}\left( n \right)} \right) = \frac{{\left( {{P_J}{{\left\| {{{\mathbf{h}}_{J{W_m}}}\left( n \right)} \right\|}^2} + {\sigma ^2}} \right){\sigma ^{2\left( {K - 1} \right)}}}}{{{2^K}}} \label{D0D101}\\ 
		&\det \left( {{{\mathbf{D}}_1}\left( n \right)} \right) = \frac{{\left( {{P_J}{{\left\| {{{\mathbf{h}}_{J{W_m}}}\left( n \right)} \right\|}^2} + {P_S}\left( n \right){{\left\| {{{\mathbf{h}}_{S{W_m}}}\left( n \right)} \right\|}^2} + {\sigma ^2}} \right){\sigma ^{2\left( {K - 1} \right)}}}}{{{2^K}}}. \label{D0D102}
	\end{align}
\end{subequations}
Finally, ${D_1}$ is obtained as 
\begin{equation}
	{D_1} = \ln \left( {\frac{{{P_J}{{\left\| {{{\bf{h}}_{J{W_m}}}\left( n \right)} \right\|}^2} + {P_S}\left( n \right){{\left\| {{{\bf{h}}_{S{W_m}}}\left( n \right)} \right\|}^2} + {\sigma ^2}}}{{{P_J}{{\left\| {{{\bf{h}}_{J{W_m}}}\left( n \right)} \right\|}^2} + {\sigma ^2}}}} \right).
	\label{D103}
\end{equation}

Utilizing a similar method, ${D_2}$ is rewritten as
\begin{equation}
	\begin{aligned}
		{D_2} = & - \int_{{\bf{y}} \in {\mathbb{R}^K}} {\frac{{{{\bf{y}}^T}\theta _0^{ - 1}\left( n \right){\bf{y}}\exp \left( { - \frac{1}{2}{{\bf{y}}^T}\theta _0^{ - 1}\left( n \right){\bf{y}}} \right)\det \left( {{\bf{Q}}_0^T\left( n \right)} \right)d{\bf{y}}}}{{{{\left( {2\pi } \right)}^{\frac{K}{2}}}\tau _1^{\frac{1}{2}} \cdots \tau _K^{\frac{1}{2}}}}} \\
		= & - \int_{{\bf{y}} \in {\mathbb{R}^K}} {\frac{{\tau _1^{ - 1}y_1^2}}{{{{\left( {2\pi } \right)}^{\frac{K}{2}}}\tau _1^{\frac{1}{2}} \cdots \tau _K^{\frac{1}{2}}}}\exp \left( { - \frac{1}{2}{{\bf{y}}^T}\theta _0^{ - 1}\left( n \right){\bf{y}}} \right)d{\bf{y}}} -  \cdots \\
		&- \int_{{\bf{y}} \in {\mathbb{R}^K}} {\frac{{\tau _K^{ - 1}y_K^2}}{{{{\left( {2\pi } \right)}^{\frac{K}{2}}}\tau _1^{\frac{1}{2}} \cdots \tau _K^{\frac{1}{2}}}}\exp \left( { - \frac{1}{2}{{\bf{y}}^T}\theta _0^{ - 1}\left( n \right){\bf{y}}} \right)d{\bf{y}}}\\
		=& - \sum\limits_{i = 1}^K {\int_{{\bf{y}} \in {\mathbb{R}^K}} {\frac{{\tau _i^{ - 1}y_i^2}}{{{{\left( {2\pi } \right)}^{\frac{K}{2}}}\tau _1^{\frac{1}{2}} \cdots \tau _K^{\frac{1}{2}}}}\exp \left( { - \frac{1}{2}{{\bf{y}}^T}\theta _0^{ - 1}\left( n \right){\bf{y}}} \right)d{\bf{y}}} }.
		\label{D200}
	\end{aligned}
\end{equation}
Due to 
\begin{equation}
	\begin{aligned}
		&\int_{{\bf{y}} \in {\mathbb{R}^K}} {\frac{{\tau _i^{ - 1}y_i^2}}{{{{\left( {2\pi } \right)}^{\frac{K}{2}}}\tau _1^{\frac{1}{2}} \cdots \tau _K^{\frac{1}{2}}}}\exp \left( { - \frac{1}{2}{{\bf{y}}^T}\theta _0^{ - 1}\left( n \right){\bf{y}}} \right)d{\bf{y}}} \\
		=& \int_{ - \infty }^{ + \infty } {\frac{{\tau _i^{ - 1}y_i^2}}{{{{\left( {2\pi } \right)}^{\frac{1}{2}}}\tau _1^{\frac{1}{2}}}}\exp \left( { - \frac{1}{2}\tau _1^{ - 1}y_1^2} \right)d{y_1}}  \times \cdots \times \int_{ - \infty }^{ + \infty } {\frac{1}{{{{\left( {2\pi } \right)}^{\frac{1}{2}}}\tau _K^{\frac{1}{2}}}}\exp \left( { - \frac{1}{2}\tau _K^{ - 1}y_K^2} \right)d{y_K}} \\
		= &1,
		\label{liancheng1}
	\end{aligned}
\end{equation}
we obtain ${D_2} = - K$.

Based on the defintion ${{\bf{K}}_1}\left( n \right)$ and ${{\bf{K}}_1}\left( n \right)$, we obtain
\begin{equation}
	\begin{aligned}
		{{\mathbf{D}}_0}\left( n \right) = &\frac{{{\sigma ^2}{\mathbf{I}} + {P_J}{{\mathbf{h}}_{J{W_m}}}\left( n \right){\mathbf{h}}_{J{W_m}}^H\left( n \right)}}{2} \\
		= & \frac{{{\sigma ^2}{\mathbf{I}} + {P_J}{{\left| {{h_{J{W_m}}}\left( n \right)} \right|}^2}{\mathbf{1}}}}{2}
		\label{D_0(n)}
	\end{aligned}
\end{equation}
and
\begin{equation}
	\begin{aligned}
		{{\mathbf{D}}_1}\left( n \right) &= \frac{{{\sigma ^2}{\mathbf{I}} + {{\mathbf{g}}_m}\left( n \right){\mathbf{g}}_m^H\left( n \right)}}{2} \\
		&= \frac{{{\sigma ^2}{\mathbf{I}} + \left( {{P_J}{{\left| {{h_{J{W_m}}}\left( n \right)} \right|}^2} + {P_S}\left( n \right){{\left| {{h_{S{W_m}}}\left( n \right)} \right|}^2}} \right){\mathbf{1}}}}{2},
		\label{D_1(n)}
	\end{aligned}
\end{equation}
respectively, 
where ${\mathbf{1}}$ is $K \times K$ matrix of unit elements and ${\mathbf{I}}$ is $K \times K$ identity matrix. 
Then, similar to ${\mathbf{D}}_0^{ - 1}\left( n \right)$, we obtian 
${\mathbf{D}}_1^{ - 1}\left( n \right) = {\mathbf{Q}}_0^T\left( n \right)\theta _1^{ - 1}\left( n \right){{\mathbf{Q}}_0}\left( n \right)$, where $\theta _1\left( n \right)$ is the diagonal matrix composed of ${\mathbf{D}}_1\left( n \right)$'s eigenvalues. 
Thus, ${D_3}$ is written as
\begin{equation}
	\begin{aligned}
	  {D_3} =& \int_{{\bf{y}} \in {\mathbb{R}^K}} {\frac{{{{\bf{y}}^T}\theta _1^{ - 1}\left( n \right){\bf{y}}\exp \left( { - \frac{1}{2}{{\bf{y}}^T}\theta _0^{ - 1}\left( n \right){\bf{y}}} \right)\det \left( {{\bf{Q}}_0^T\left( n \right)} \right)d{\bf{y}}}}{{{{\left( {2\pi } \right)}^{\frac{K}{2}}}\tau _1^{\frac{1}{2}} \cdots \tau _K^{\frac{1}{2}}}}} \\
	  = & \int_{{\bf{y}} \in {\mathbb{R}^K}} {\frac{{\kappa _1^{ - 1}y_1^2}}{{{{\left( {2\pi } \right)}^{\frac{K}{2}}}\tau _1^{\frac{1}{2}} \cdots \tau _K^{\frac{1}{2}}}}\exp \left( { - \frac{1}{2}{{\bf{y}}^T}\theta _0^{ - 1}\left( n \right){\bf{y}}} \right)d{\bf{y}}}  +  \cdots \\
	  &+ \int_{{\bf{y}} \in {\mathbb{R}^K}} {\frac{{\kappa _K^{ - 1}y_K^2}}{{{{\left( {2\pi } \right)}^{\frac{K}{2}}}\tau _1^{\frac{1}{2}} \cdots \tau _K^{\frac{1}{2}}}}\exp \left( { - \frac{1}{2}{{\bf{y}}^T}\theta _0^{ - 1}\left( n \right){\bf{y}}} \right)d{\bf{y}}} \\
	  = & \sum\limits_{j = 1}^K {\int_{{\bf{y}} \in {\mathbb{R}^K}} {\frac{{\kappa _j^{ - 1}y_j^2}}{{{{\left( {2\pi } \right)}^{\frac{K}{2}}}\tau _1^{\frac{1}{2}} \cdots \tau _K^{\frac{1}{2}}}}\exp \left( { - \frac{1}{2}{{\bf{y}}^T}\theta _0^{ - 1}\left( n \right){\bf{y}}} \right)d{\bf{y}}} },
	\end{aligned}
\end{equation}
where ${\kappa _1}, \cdots, {\kappa _K}$ is the eigenvalues of ${\mathbf{D}}_1\left( n \right)$. 
Similar to (\ref{liancheng1}), we have 
\begin{equation}
	\begin{aligned}
		&\int_{{\bf{y}} \in {\mathbb{R}^K}} {\frac{{\kappa _j^{ - 1}y_j^2}}{{{{\left( {2\pi } \right)}^{\frac{K}{2}}}\tau _1^{\frac{1}{2}} \cdots \tau _K^{\frac{1}{2}}}}\exp \left( { - \frac{1}{2}{{\bf{y}}^T}\theta _0^{ - 1}\left( n \right){\bf{y}}} \right)d{\bf{y}}} \\
		=& \int_{ - \infty }^{ + \infty } {\frac{{\kappa _j^{ - 1}y_j^2}}{{{{\left( {2\pi } \right)}^{\frac{1}{2}}}\tau _1^{\frac{1}{2}}}}\exp \left( { - \frac{1}{2}\tau _1^{ - 1}y_1^2} \right)d{y_1}}  \times  \cdots \times \int_{ - \infty }^{ + \infty } {\frac{1}{{{{\left( {2\pi } \right)}^{\frac{1}{2}}}\tau _K^{\frac{1}{2}}}}\exp \left( { - \frac{1}{2}\tau _K^{ - 1}y_K^2} \right)d{y_K}} \\
		=& \frac{{{\tau _j}}}{{{\kappa _j}}}. 
		\label{liancheng2}
	\end{aligned}
\end{equation}
Thus, we obtain ${D_3} = \frac{{{\tau _1}}}{{{\kappa _1}}} +  \cdots  + \frac{{{\tau _K}}}{{{\kappa _K}}}$. 
According to (\textrm{\ref{D_0(n)}}), (\textrm{\ref{D_1(n)}}), due to the eigenvalues of ${\mathbf{1}}$ is $K,0, \cdots ,0$, we obtain the eigenvalues of ${{\mathbf{D}}_0}\left( n \right)$ and ${{\mathbf{D}}_1}\left( n \right)$ as 
$\frac{{{P_J}K{{\left| {{h_{J{W_m}}}\left( n \right)} \right|}^2} + {\sigma ^2}}}{2},\frac{{{\sigma ^2}}}{2}$, $\cdots$, $\frac{{{\sigma ^2}}}{2}$ 
and 
$\frac{{{P_J}K{{\left| {{h_{J{W_m}}}\left( n \right)} \right|}^2} + {P_S}\left( n \right)K{{\left| {{h_{S{W_m}}}\left( n \right)} \right|}^2} + {\sigma ^2}}}{2},\frac{{{\sigma ^2}}}{2}$, $\cdots,$ $\frac{{{\sigma ^2}}}{2}$, respectively. 
Finally, ${D_3}$ is obtained as 
\begin{equation}
	\begin{aligned}
		{D_3} = &\frac{{{P_J}{{\left\| {{{\bf{h}}_{J{W_m}}}\left( n \right)} \right\|}^2} + {\sigma ^2}}}{{{P_J}{{\left\| {{{\bf{h}}_{J{W_m}}}\left( n \right)} \right\|}^2} + {P_S}\left( n \right){{\left\| {{{\bf{h}}_{S{W_m}}}\left( n \right)} \right\|}^2} + {\sigma ^2}}}  + K - 1.
		\label{H232}
	\end{aligned}
\end{equation}

Then, we obtain 
	\begin{equation}
		\begin{aligned}
			\mathbb{D}\left( {P_0^I||P_1^I} \right)  = & \ln \left( {\frac{{{\sigma ^2} + {P_J}{{\left\| {{{\mathbf{h}}_{J{W_m}}}\left( n \right)} \right\|}^2} + {P_S}\left( n \right){{\left\| {{{\mathbf{h}}_{S{W_m}}}\left( n \right)} \right\|}^2}}}{{{\sigma ^2} + {P_J}{{\left\| {{{\mathbf{h}}_{J{W_m}}}\left( n \right)} \right\|}^2}}}} \right) \\
			&+ \frac{{{\sigma ^2} + {P_J}{{\left\| {{{\mathbf{h}}_{J{W_m}}}\left( n \right)} \right\|}^2}}}{{{\sigma ^2} + {P_J}{{\left\| {{{\mathbf{h}}_{J{W_m}}}\left( n \right)} \right\|}^2} + {P_S}\left( n \right){{\left\| {{{\mathbf{h}}_{S{W_m}}}\left( n \right)} \right\|}^2}}} - 1\\
			=& I\left( {\ln \left( {1 + \frac{{{P_S}\left( n \right){{\left\| {{{\mathbf{h}}_{S{W_m}}}\left( n \right)} \right\|}^2}}}{{{\sigma ^2} + {P_J}{{\left\| {{{\mathbf{h}}_{J{W_m}}}\left( n \right)} \right\|}^2}}}} \right) - \frac{{\frac{{{P_S}\left( n \right){{\left\| {{{\mathbf{h}}_{S{W_m}}}\left( n \right)} \right\|}^2}}}{{{\sigma ^2} + {P_J}{{\left\| {{{\mathbf{h}}_{J{W_m}}}\left( n \right)} \right\|}^2}}}}}{{1 + \frac{{{P_S}\left( n \right){{\left\| {{{\mathbf{h}}_{S{W_m}}}\left( n \right)} \right\|}^2}}}{{{\sigma ^2} + {P_J}{{\left\| {{{\mathbf{h}}_{J{W_m}}}\left( n \right)} \right\|}^2}}}}}} \right) \\
			= & I\left( {\ln \left( {1 + {\gamma _{m,2}}\left( n \right)} \right) - \frac{{{\gamma _{m,2}}\left( n \right)}}{{1 + {\gamma _{m,2}}\left( n \right)}}} \right)
			\label{KLdivergence3}
		\end{aligned}
	\end{equation}

The proof of \textbf{Lemma 2} is now completed.

\end{appendices}

\end{document}